\newtheorem{theorem}{Theorem}
\newtheorem*{theorem*}{Theorem}
\newtheorem{conjecture}{Conjecture}
\newtheorem{question}{Question}
\newtheorem{lemma}[theorem]{Lemma}
\newtheorem{proposition}[theorem]{Proposition}
\newcommand{\fs}{\mathcal S}
\newcommand{\fc}{\mathcal C}
\newcommand{\p}{{\mathbf{Pr}}}
\newcommand{\E}{\mathbb{E}}
\newcommand{\gall}{\gamma_{\ell\hspace{-.058em}\ell}}
\title{A superlocal version of Reed's Conjecture}
\author{Katherine Edwards\thanks{Email: ke@princeton.edu. Supported by an NSERC PGS-D Fellowship and a Gordon Wu Fellowship.}}
\affil{Department of Computer Science\\Princeton University, Princeton, NJ}
\author{Andrew D.~King\thanks{Email: andrew.d.king@gmail.com.  This research was supported in part by an EBCO/Ebbich Postdoctoral Scholarship and the NSERC Discovery Grants of Pavol Hell and Bojan Mohar.}}
\affil{D-Wave Systems, Burnaby, BC}
\begin{document}

\maketitle

\begin{abstract}
Reed's well-known $\omega$, $\Delta$, $\chi$ conjecture proposes that every graph satisfies $\chi \leq \lceil \frac 12(\Delta+1+\omega)\rceil$.  The second author formulated a {\em local strengthening} of this conjecture that considers a bound supplied by the neighbourhood of a single vertex.  Following the idea that the chromatic number cannot be greatly affected by any particular stable set of vertices, we propose a further strengthening that considers a bound supplied by the neighbourhoods of two adjacent vertices.  We provide some fundamental evidence in support, namely that the stronger bound holds in the fractional relaxation and holds for both quasi-line graphs and graphs with stability number two.  We also conjecture that in the fractional version, we can push the locality even further.
\end{abstract}

\section{Introduction}

We consider simple graphs with clique number $\omega$, maximum degree $\Delta$, chromatic number $\chi$, and fractional chromatic number $\chi_f$ (we will define $\chi_f$ later).  For a graph $G$ and a set of vertices $S$ we denote the subgraph of $G$ induced by $S$ by $G|S$.  For a vertex $v$ we use $N(v)$ and $\tilde N(v)$ to denote the neighbourhood and closed neighbourhood of $v$, respectively.  We use $\omega(v)$ to denote $\omega(G|\tilde N(v))$, i.e.\ the size of the largest clique containing $v$.  When the graph in question is not clear, we specify with a subscript, for example $N_G(v)$.

The work in this paper revolves around Reed's $\omega$, $\Delta$, $\chi$ conjecture \cite{reed98}, which itself can be broadly considered as a generalization of Brooks' Theorem.  Brooks' Theorem states that whenever $\Delta \geq 3$, a graph with maximum degree $\Delta$ is $\Delta$-colourable unless it has the obvious obstruction: a clique of size $\Delta+1$.  Reed's Conjecture is much more general:

\begin{conjecture}[Reed's Conjecture]\label{con:1}
Every graph satisfies $\chi \leq \lceil \frac 12(\Delta+1+\omega) \rceil$.
\end{conjecture}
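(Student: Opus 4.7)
The plan is to attack Conjecture~\ref{con:1} via a minimum counterexample argument. Assume $G$ is a vertex-critical graph with $\chi(G) > k := \lceil \frac{1}{2}(\del+1+\om)\rceil$ and $|V(G)|$ minimum subject to this. Every proper subgraph is $k$-colourable, so the minimum degree is at least $k$. If $\om \geq \del$ then $G$ contains $K_{\del+1}$ and Brooks' theorem gives the bound immediately, so we may assume $\om \leq \del-1$, which forces $k \leq \del$ and leaves genuine room between $\om$ and $\del+1$ to be exploited.

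The heart of the argument would be to exploit a dichotomy between \emph{dense} vertices, whose neighbourhood contains a clique close to $K_\om$, and \emph{sparse} vertices, whose neighbourhood is far from a clique and therefore supports many stable sets. On the dense part I would cover large cliques by colour classes that simultaneously absorb sparse vertices, so that each ``expensive'' clique shares a colour with independent material from elsewhere in the graph. On the sparse part, a probabilistic argument using the Lov\'asz Local Lemma and iterated nibble rounds should produce a partial colouring that uses substantially fewer than $\del+1$ colours, because each vertex admits an excess of candidate independent sets among its neighbours. Balancing the $\om$ colours forced on any maximum clique against the savings produced by sparsity should yield the conjectured average $\frac{1}{2}(\del+1+\om)$.

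The principal obstacle, and the reason this conjecture has resisted attack for over twenty years, is quantifying precisely how much sparsity buys. Reed's original argument establishes $\chi \leq \lceil \epsilon(\del+1)+(1-\epsilon)(\om+1)\rceil$ for a small but unspecified $\epsilon>0$, and pushing $\epsilon$ all the way to $\frac{1}{2}$ seems to demand either a much tighter structural decomposition of vertex-critical graphs or a more delicate probabilistic accounting that tracks interactions between overlapping nibble rounds. My strategy would therefore be staged: first attempt the local strengthening promoted by the second author, bounding $\chi$ in terms of $\om(v)$ at a single vertex, and then push to the superlocal version using the neighbourhoods of two adjacent vertices as in this paper, hoping the extra local flexibility provides the slack needed to close the gap. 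Even without achieving the full conjecture, this route would naturally produce the fractional and special-case results that the abstract claims as evidence for the superlocal version.
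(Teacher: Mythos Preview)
The statement you are attempting to prove is Conjecture~\ref{con:1}, Reed's Conjecture, which is \emph{open}; the paper does not prove it and does not claim to. It is stated only as motivation for the superlocal strengthening. There is therefore no ``paper's own proof'' to compare against.

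Your proposal is not a proof but a strategy outline, and you yourself identify its fatal gap: the dense/sparse dichotomy together with the Local Lemma nibble is exactly Reed's original approach, and as you note it only yields $\chi \leq \lceil \epsilon\,\omega + (1-\epsilon)(\Delta+1)\rceil$ for some small $\epsilon>0$, not $\epsilon = \tfrac12$. Nothing in your sketch explains how to close that gap; ``balancing'' and ``hoping the extra local flexibility provides the slack'' are not arguments. The suggestion to route through the local and superlocal versions is backwards: those are \emph{stronger} statements than Conjecture~\ref{con:1}, so proving them would be harder, not easier, and in any case they too are open in general. What the paper actually establishes is the fractional relaxation of the superlocal version (Theorem~\ref{thm:frac3}) and the integer superlocal bound for specific graph classes, none of which implies Conjecture~\ref{con:1}.
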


In other words, a graph with maximum degree $\Delta$ is $(\Delta+1-k)$-colourable unless it contains a clique of size at least $\Delta+2-2k$.

This conjecture is known to hold for claw-free graphs \cite{kingthesis} and some other hereditary families of graphs \cite{aravindks11}.  Furthermore Reed proved that the fractional relaxation holds, even without the round-up -- a proof appears in \cite{molloyrbook}:

\begin{theorem}[Fractional relaxation]\label{thm:frac1}
Every graph satisfies $\chi_f \leq \frac 12(\Delta+1+\omega)$.
\end{theorem}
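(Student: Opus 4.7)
The natural route is via LP duality. The fractional chromatic number equals $\max_w w(V)/\alpha_w(G)$ over nonnegative vertex weightings $w$ not identically zero, where $\alpha_w(G)$ denotes the maximum weight of a stable set. The claim therefore reduces to showing that for every such $w$ there is a stable set $S$ with $w(S) \ge \tfrac{2 w(V)}{\Delta+1+\omega}$, a weighted-stable-set existence statement amenable to direct combinatorial attack.

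I would attempt a probabilistic construction of such a stable set. Sample a random permutation $\pi$ of $V(G)$ and let $S_\pi$ be the greedy stable set of vertices with no earlier neighbour in $\pi$. The naive bound $\p[v \in S_\pi] \ge 1/|\tilde N(v)|$ already gives $\E[w(S_\pi)] \ge w(V)/(\Delta+1)$, the trivial fractional Brooks bound. To improve this, the plan is to exploit the clique structure of $\tilde N(v)$: at most one vertex of a maximum clique $K$ of $G|\tilde N(v)$ (of size $\omega(v) \le \omega$) can lie in $S_\pi$, so the $\omega(v)-1$ clique members other than the earliest are ``wasted'' as potential blockers of $v$. A careful accounting of which neighbours of $v$ actually block $v$, versus being blocked themselves, should upgrade the inclusion probability from $\tfrac{1}{\Delta+1}$ to $\tfrac{2}{\Delta+1+\omega(v)}$, which suffices.

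The main obstacle is making this per-vertex analysis sharp enough to deliver the factor $\tfrac{1}{2}$; a uniformly random permutation is unlikely to be the right distribution, and one probably needs to bias the sampling according to a chosen clique decomposition of each neighbourhood, so that the vertices of the maximum clique in $\tilde N(v)$ are nearly consecutive in $\pi$. An alternative path that bypasses sharp probabilistic estimates is induction on $|V(G)|$: remove a carefully chosen maximum clique $K$ and apply the hypothesis to $G-K$, then merge the resulting fractional colorings while tracking the extra weight needed to cover $K$. Either way, the delicate point is interpolating between the ``dense'' regime (where $\omega$ is close to $\Delta+1$ and the clique lower bound $\chi_f \ge \omega$ is nearly tight) and the ``sparse'' regime (where a Brooks-style greedy argument applies), and the factor $\tfrac{1}{2}$ emerges precisely from that interpolation.
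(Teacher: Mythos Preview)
Your LP-duality reduction is valid: $\chi_f(G)\le\tfrac12(\Delta+1+\omega)$ is indeed equivalent to the assertion that every nonnegative weighting $w$ admits a stable set of weight at least $2w(V)/(\Delta+1+\omega)$. But the proposal does not supply a method that actually achieves this. The random-permutation construction you describe (take $v$ into $S_\pi$ when $v$ precedes all its neighbours in $\pi$) gives $\p[v\in S_\pi]=1/(d(v)+1)$ \emph{exactly}, and this is already too small in $K_{n,n}$ for $n\ge 2$, where $1/(n+1)<2/(n+3)=2/(\Delta+1+\omega)$. You acknowledge this and float two fixes---bias the permutation along a clique decomposition of each neighbourhood, or induct by deleting a maximum clique---but neither is developed, and I do not see how either yields the factor $\tfrac12$ without importing the idea below. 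So as it stands there is a genuine gap: a correct reformulation, but no workable construction.

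The paper's argument is quite different and is worth internalising. Instead of hunting for one heavy stable set, it constructs a fractional colouring directly by an iterative procedure: repeatedly put equal weight on all \emph{maximum} stable sets of the current graph until some vertex accumulates total weight~$1$, delete the saturated vertices, and continue on the remaining graph. The entire analysis rests on one lemma (Lemma~\ref{lem:exp1} here): if $S$ is a uniformly random maximum stable set, then for every vertex $v$,
\[
\E\bigl(|S\cap N(v)|\bigr)\ \ge\ 2-(\omega(v)+1)\,\p(v\in S).
\]
The constant $2$ is precisely the factor you were seeking, and it comes from the \emph{maximum} (not merely maximal or greedy) hypothesis on $S$ via an exchange argument among maximum stable sets. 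Summing this inequality over the iterations forces every vertex to be saturated before the total weight spent reaches $\tfrac12(d(v)+1+\omega(v))\le\tfrac12(\Delta+1+\omega)$. A greedy or random-permutation stable set does not carry this structural information, which is exactly why your approach stalls at $1/(\Delta+1)$; the missing idea is to sample from maximum stable sets rather than to build one greedily.
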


For a graph $G$ we let $\gamma(G)$ and $\gamma'(G)$ denote $\lceil \frac 12(\Delta+1+\omega) \rceil$ and $\frac 12(\Delta+1+\omega)$, respectively.  As observed by McDiarmid (Exercise 21.1 in \cite{molloyrbook}; a proof appears in Chapter 2 of \cite{kingthesis}), Theorem \ref{thm:frac1} can be strengthened so as to consider only the possible bounds achieved in the closed neighbourhood of a vertex. Letting $\gamma'_\ell(v)$ denote $\gamma'(G|\tilde N(v))$ and $\gamma'_\ell(G)$ denote $\max_{v\in V(G)}\gamma'_\ell(v)$, we have:

\begin{theorem}[Local fractional relaxation]\label{thm:frac2}
Every graph $G$ satisfies $\chi_f(G) \leq \gamma'_\ell(G)$.
\end{theorem}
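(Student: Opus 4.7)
The plan is to leverage LP duality and reduce to a local application of Theorem \ref{thm:frac1}. Recall that
\[
\chi_f(G) \;=\; \max_{w\,:\,V(G)\to\R_{\ge 0},\ w\not\equiv 0}\ \frac{\sum_{v} w(v)}{\alpha_w(G)},
\]
where $\alpha_w(G) := \max_{I\text{ indep}} \sum_{u\in I} w(u)$. Hence it suffices to show that $\sum_v w(v) \le \gamma'_\ell(G)\cdot\alpha_w(G)$ for every nonnegative weighting $w\not\equiv 0$.

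First I would observe that for any vertex $v$, the restriction $w|_{\tilde N(v)}$ is a nonnegative weighting on $V(G|\tilde N(v))$, and since every independent set of $G|\tilde N(v)$ is also independent in $G$, we have $\alpha_w(G|\tilde N(v))\le\alpha_w(G)$. Applying Theorem \ref{thm:frac1} to $G|\tilde N(v)$ then yields
\[
\sum_{u\in\tilde N(v)} w(u) \;\le\; \chi_f(G|\tilde N(v))\cdot\alpha_w(G|\tilde N(v)) \;\le\; \gamma'_\ell(v)\cdot\alpha_w(G),
\]
and this is the only place where Theorem \ref{thm:frac1} enters. Thus the $w$-mass in each closed neighbourhood is already bounded by $\gamma'_\ell(G)\cdot\alpha_w(G)$.

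The main obstacle is upgrading these local bounds to the global inequality, since naively summing over all $v$ introduces a spurious factor of $(\Delta+1)$ through the overcount $\sum_u w(u)(\deg(u)+1)$. To close this gap I would argue with a minimal counterexample $G$: using complementary slackness to constrain the structure of an optimal weighting $w$ and the maximum $w$-weight independent sets that certify $\alpha_w(G)$, one should show that either (a) some vertex $v^\ast$ dominates the support of $w$, in which case the local bound at $v^\ast$ already gives $\sum_v w(v) = \sum_{u\in\tilde N(v^\ast)} w(u) \le \gamma'_\ell(v^\ast)\cdot\alpha_w(G)$, or (b) one can delete a vertex outside $\tilde N(v^\ast)$ to produce a smaller counterexample, contradicting minimality. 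Carrying out this rigidity argument -- essentially showing that a tight instance must be concentrated in a single closed neighbourhood -- is the step I expect to require the most care.
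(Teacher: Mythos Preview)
There is a genuine gap in your proposal, and it is precisely the step you flag as ``requiring the most care''. Your local inequality $\sum_{u\in\tilde N(v)} w(u) \le \gamma'_\ell(v)\cdot\alpha_w(G)$ is correct, but the dichotomy you propose for a minimal counterexample is not justified. Minimality does let you assume $w(u)>0$ for every $u$ (deleting a zero-weight vertex preserves $\chi_f$ and cannot raise $\gamma'_\ell$), but complementary slackness gives nothing beyond that: it only says that each stable set used in an optimal fractional colouring has maximum $w$-weight and that each vertex is covered with total weight exactly $1$. Neither condition forces the support of $w$ into a single closed neighbourhood; a vertex-transitive graph with uniform $w$ satisfies all of these rigidity constraints without having any dominating vertex. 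So there is no mechanism by which your option~(b) produces a smaller counterexample, and the argument stalls. Put differently, patching your local bounds into the global one would require a fractional dominating set of total weight~$1$, which exists only when some vertex already dominates $G$; the black-box reduction to Theorem~\ref{thm:frac1} cannot escape this obstruction.

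The paper's proof is entirely different and works on the primal side. It runs Reed's iterative fractional colouring procedure (spread weight uniformly over all maximum stable sets until some vertex is saturated, delete saturated vertices, repeat) with a total budget of $\gamma'_\ell(G)$, and uses Lemma~\ref{lem:exp1} --- the bound $\E(|S\cap N(v)|) \ge 2-(\omega(v)+1)\p(v\in S)$ for a uniformly random maximum stable set $S$ --- to show that no vertex can remain unsaturated when the budget is exhausted. Concretely, if $v$ were unsaturated at termination, summing the lemma over all iterations gives
\[
\sum_{u\in N(v)} wo_u \;\ge\; 2\gamma'_\ell(G) - (\omega(v)+1)\,wo_v \;>\; d(v)+1+\omega(v) - (\omega(v)+1) \;=\; d(v),
\]
contradicting $wo_u\le 1$ for each $u\in N(v)$. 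The point (McDiarmid's observation) is that Lemma~\ref{lem:exp1} already involves only $d(v)$ and $\omega(v)$ rather than $\Delta$ and $\omega$, so the \emph{same} algorithm and analysis that prove Theorem~\ref{thm:frac1} automatically yield the local bound $\gamma'_\ell(G)$; no separate call to Theorem~\ref{thm:frac1} is made.
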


Inspired by structural observations, the second author conjectured that this {\em local strengthening} holds in the integer setting \cite{kingthesis}. Let $\gamma_\ell(v) $ denote $ \gamma(G|\tilde N(v))$ and let $\gamma_\ell(G)$ denote $\max_{v\in V(G)}\gamma_\ell(v)$.

\begin{conjecture}[Local Reed's Conjecture]\label{con:2}
Every graph $G$ satisfies $\chi(G) \leq \gamma_\ell(G)$.
\end{conjecture}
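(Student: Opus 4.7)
The plan is to proceed by strong induction on $|V(G)|$, taking Theorem \ref{thm:frac2} as the departure point. Suppose $G$ is a minimum counterexample and set $k = \gamma_\ell(G)$, so $\chi(G) \geq k+1$. Theorem \ref{thm:frac2} already gives $\chi_f(G) \leq \gamma'_\ell(G) \leq k$, which leaves a gap of at most $\tfrac12$ between the fractional bound in hand and the integer bound we need; the whole game is to convert this fractional slack into an integer-valued saving through a rounding argument.

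First I would extract consequences of $(k+1)$-criticality: every proper induced subgraph is $k$-colourable; $G$ has no small clique cutset (otherwise we glue colourings across the cut); and for every vertex $v$ we may assume $\gamma_\ell(v) = k$, since otherwise we could delete a vertex $u$ with $\gamma_\ell(u) < k$, colour $G - u$ by induction, and extend greedily using the bound on $\deg(u)$ and $\omega(u)$. The heart of the argument would then be to exhibit a stable set $S$ whose removal strictly decreases the local parameter, namely $\gamma_\ell(G \setminus S) \leq k-1$: by induction $G \setminus S$ would be $(k-1)$-colourable, and one new colour class would absorb $S$, yielding a $k$-colouring of $G$, a contradiction. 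A natural attempt to build $S$ is randomised---include each vertex independently with a small probability and apply the Lovász Local Lemma to guarantee that, at every remaining vertex $v$, either its maximum clique has been hit or enough neighbours are removed to drop the local bound at $v$ by one. In parallel, I would try to extend the structural approach that handles the claw-free case in \cite{kingthesis} by isolating the induced configurations that obstruct a direct decomposition.

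The main obstacle is exactly the rounding step, which is precisely where Reed's Conjecture itself remains stuck. The slack between $\gamma'_\ell$ and $\gamma_\ell$ is at most $\tfrac12$, so any random partial colouring incurs local fluctuations that easily swamp it; a purely probabilistic attack concentrates around the fractional bound and not around the round-up. Consequently I expect that a proof in full generality will require a genuinely structural insight, identifying a specific local configuration that reliably saves a colour beyond what Theorem \ref{thm:frac2} delivers---the adjacent pair of vertices suggested by the superlocal conjecture of this paper is a promising candidate in this direction. Given the difficulty, my first target would not be Conjecture \ref{con:2} in full, but rather intermediate classes such as graphs with bounded stability number, graphs excluding a fixed induced subgraph containing the claw, or near-quasi-line graphs, with the aim of developing the structural vocabulary needed before attacking the general conjecture.
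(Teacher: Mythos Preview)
The statement you are addressing is Conjecture~\ref{con:2}, which the paper presents as an \emph{open conjecture}; there is no proof in the paper to compare against. Your write-up is accordingly a research outline rather than a proof, and you yourself identify the central obstruction (the rounding gap between $\gamma'_\ell$ and $\gamma_\ell$) as unresolved.

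One concrete error in the sketch deserves mention. You claim that in a minimum counterexample one may assume $\gamma_\ell(v)=k$ for every $v$, because if $\gamma_\ell(u)<k$ then after $k$-colouring $G-u$ one can ``extend greedily using the bound on $\deg(u)$ and $\omega(u)$''. This does not work: $\gamma_\ell(u)\leq k-1$ only gives $d(u)+\omega(u)\leq 2k-3$, which permits $d(u)\geq k$, so all $k$ colours may already appear on $N(u)$ and greedy extension fails. Converting information about $\omega(u)$ into a saved colour at $u$ is precisely the content of Reed's Conjecture, not a preliminary reduction one can dispose of. The remainder of your plan---locating a stable set whose deletion lowers $\gamma_\ell$ everywhere---is a natural heuristic, but as you acknowledge it runs into the same barrier that keeps Conjecture~\ref{con:1} open.
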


A typical example of a graph $G$ for which $\gamma(G)$ is far from $\chi(G)$ is the star $K_{1,r}$.  For such graphs we have $\gamma_\ell(G) = \gamma(G)$, so the bound offered by the local conjecture isn't any better.  And yet a greedy colouring algorithm can very easily $2$-colour a star.  Furthermore, examples of the tightness of Reed's Conjecture tend to be vertex-transitive, or at least very nearly regular.  So can we get a better bound when vertices that are hard to colour (i.e.\ have high $\gamma_\ell(v)$) form a stable set?  The answer, at least in the fractional setting and for certain graph classes, is yes.

\subsection{The superlocal strengthening}

Our idea is that a graph should be easy to colour if no two vertices with high $\gamma_\ell(v)$ are adjacent.  This gives rise to the invariants $\gall$ and $\gall'$, which we define as follows:
\begin{align*}
\textrm{For }uv\in E(G), \textrm{ define } \gall '(uv) &\textrm{ as } \tfrac 14(d(u)+d(v)+\omega(u)+\omega(v)+2)\\ &\hspace{.25em} = \tfrac 12(\gamma'_\ell(u)+\gamma'_\ell(v)).\\
\textrm{Define } \gall '(G) &\textrm{ as }\ \max_{uv\in E(V)}{\gall '(uv).}\\
\textrm{For }uv\in E(G), \textrm{ define } \gall(uv) &\textrm{ as }\ \lceil\gall '(uv)\rceil.\\
\textrm{Define } \gall(G) &\textrm{ as }\ \lceil \gall '(G)\rceil.
\end{align*}

We pose the natural conjecture regarding these invariants:

\begin{conjecture}[Superlocal Reed's Conjecture]\label{con:3}
Every graph $G$ satisfies $\chi(G) \leq \gall(G)$.
\end{conjecture}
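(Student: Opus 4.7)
I would attack Conjecture~\ref{con:3} by studying a vertex-minimal counterexample $G$, setting $k := \gall(G)+1 \leq \chi(G)$. Vertex-criticality gives $\delta(G) \geq k-1$, and by definition every edge $uv$ satisfies $d(u)+d(v)+\omega(u)+\omega(v) \leq 4k-6$. Combining these forces $\omega(u)+\omega(v) \geq 2k-4-(\Delta-d(u))-(\Delta-d(v))$ on every edge $uv$, so that both endpoints of every edge lie in substantial local cliques. This is precisely the dense-neighbourhood regime in which Reed's original attack bites, and so the strategy is to adapt it to the edge-based setting by exploiting structure at both endpoints of the critical edge simultaneously.

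The core step would imitate the probabilistic proof of the local fractional bound (Theorem~\ref{thm:frac2}): for each vertex $v$ pick, randomly and in a correlated way across edges, a small stable set $S_v \subseteq \tilde N(v)$ that witnesses a saving of one colour in $\gamma_\ell(v)$. Because $\gall'(uv) = \tfrac12(\gamma'_\ell(u)+\gamma'_\ell(v))$, the key is to couple the random choices at the two endpoints of each edge so that $S_u \cup S_v$ together saves a full colour in the joint demand at the worst edge rather than only half a colour on each side. An application of the Lovász Local Lemma in the style of Molloy--Reed should produce such a partial stable set in any sufficiently non-trivial dense neighbourhood, and then an uncolouring/recolouring argument on the residual graph would convert the probabilistic saving into a genuine $k$-colouring, contradicting the choice of $G$.

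The main obstacle is exactly the reason both Reed's Conjecture and Conjecture~\ref{con:2} remain open: turning an in-expectation colour-saving into an integer saving requires either very sharp concentration or a structural theorem that itself seems to need $\omega$ small compared to $\Delta$. The superlocal setting makes this strictly harder because the bound is attained at a single edge rather than a single vertex, so uncontrolled fluctuations on either side of the worst edge $uv$ can individually destroy the saving even when the average saving is what we want. A realistic intermediate target is therefore to prove the conjecture for classes where the local structure is highly constrained — extending the quasi-line and $\alpha(G) = 2$ cases proved in this paper, or attacking perfect graphs and line graphs, or graphs with $\omega$ bounded above — and to use those cases to calibrate which coupling of the random choices at $u$ and $v$ is forced by the structural reductions. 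I would also verify the conjecture on small graphs by computer before committing to a general attempt, since the sharpness pattern (vertex-transitive, nearly-regular obstructions) is the right sanity check that the approach above is not losing a constant factor.
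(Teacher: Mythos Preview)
This statement is a \emph{conjecture}, not a theorem: the paper does not prove it in general, and neither does your proposal. What the paper actually establishes is the fractional relaxation (Theorem~\ref{thm:frac3}) together with the integer bound for circular interval graphs, graphs with $\alpha\leq 2$, line graphs, and quasi-line graphs. You seem aware of this, since your write-up is a research plan rather than a proof and explicitly names the obstruction as ``exactly the reason both Reed's Conjecture and Conjecture~\ref{con:2} remain open.'' So there is no proof to compare.

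That said, there is a concrete slip in your setup. From $\gall(G)=k-1$ you correctly get $d(u)+d(v)+\omega(u)+\omega(v)\leq 4k-6$ for every edge, and criticality gives $d(u),d(v)\geq k-1$. But combining these yields the \emph{upper} bound $\omega(u)+\omega(v)\leq 4k-6-d(u)-d(v)\leq 2k-4$, not the lower bound you state. So the inference that ``both endpoints of every edge lie in substantial local cliques'' does not follow from these two facts alone; if anything, the constraints push you toward the sparse-clique regime where the probabilistic saving is hardest to realise. Any argument of the Molloy--Reed type would need a genuine lower bound on local clique size, and that has to come from somewhere else.

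On method, note that the paper's positive results are obtained by quite different means than the probabilistic route you sketch: the fractional bound uses Reed's iterative max-stable-set weighting with a two-vertex refinement of the expectation lemma (Lemma~\ref{lem:expectation}), the line-graph case uses Vizing fans, and the quasi-line case uses structural decomposition via canonical interval $2$-joins. If you pursue the intermediate targets you list, those are the templates to follow rather than the Local Lemma.
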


Our first piece of evidence in support of this conjecture is the fact that the fractional relaxation holds:

\begin{theorem}[Superlocal fractional relaxation]\label{thm:frac3}
Every graph $G$ satisfies $\chi_f(G) \leq \gall'(G)$.
\end{theorem}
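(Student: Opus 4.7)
The plan is to proceed by LP duality and refine the argument underlying Theorem~\ref{thm:frac2}. By LP duality for the fractional chromatic number, $\chi_f(G) \leq \gall'(G)$ is equivalent to the statement that every $y \in \mathbb{R}_{\geq 0}^{V(G)}$ with $y(S) \leq 1$ for each stable set $S \subseteq V(G)$ satisfies $\sum_v y_v \leq \gall'(G)$. Fix such a $y$.

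The central local bound is obtained as follows: for every $v \in V(G)$, Theorem~\ref{thm:frac1} applied to the induced subgraph $G|\tilde N(v)$ yields $\chi_f(G|\tilde N(v)) \leq \gamma'_\ell(v)$, since $v$ itself realizes $\Delta(G|\tilde N(v)) = d(v)$ and the clique number of this subgraph equals $\omega(v)$. Restricting the dual vector $y$ to $\tilde N(v)$ (it remains feasible there, since stable sets of $G|\tilde N(v)$ are stable in $G$) gives $y(\tilde N(v)) \leq \gamma'_\ell(v)$ for every vertex $v$. Summing this at both endpoints of an edge $uv \in E(G)$ yields
\[
y(\tilde N(u)) + y(\tilde N(v)) \leq \gamma'_\ell(u) + \gamma'_\ell(v) = 2\,\gall'(uv),
\]
and, combined with the elementary identity $y(\tilde N(u)) + y(\tilde N(v)) = y(\tilde N(u) \cup \tilde N(v)) + y(\tilde N(u) \cap \tilde N(v))$ together with $\{u,v\} \subseteq \tilde N(u) \cap \tilde N(v)$ (since $uv$ is an edge), we obtain the per-edge inequality $y(\tilde N(u) \cup \tilde N(v)) + y_u + y_v \leq 2\,\gall'(uv)$.

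The crux is converting these per-edge inequalities into the global bound $\sum_v y_v \leq \max_{uv} \gall'(uv)$. Because $\tilde N(u) \cup \tilde N(v)$ need not exhaust $V(G)$, a naive per-edge argument is insufficient and only recovers a factor-$2$ loss. The approach I would take is to form an appropriate convex combination of the local inequalities over a cleverly chosen distribution on edges---analogous to how Theorem~\ref{thm:frac2} follows from Theorem~\ref{thm:frac1} by selecting an optimally placed vertex---and to leverage the fact that an edge $uv$ realizing $\gall'(G)$ must have both endpoints with high $\gamma'_\ell$, effectively localizing the support of the worst-case $y$ near such an edge. The main obstacle is precisely this final step: matching the edge-localized inequalities to the global quantity $\sum_v y_v$ with the correct constant, which I expect to require either a tailored LP combination or a structural analysis of an extremal vertex $v^*$ (maximizing $\gamma'_\ell$) showing that its neighbourhood must contain a companion vertex $u$ with $\gamma'_\ell(u) + \gamma'_\ell(v^*) \geq 2\chi_f(G)$.
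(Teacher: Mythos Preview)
Your proposal has a genuine and essential gap, which you partly recognise. The local inequality $y(\tilde N(v)) \leq \gamma'_\ell(v)$ is correct, and your per-edge inequality
\[
y(\tilde N(u)\cup \tilde N(v)) + y_u + y_v \;\leq\; 2\,\gall'(uv)
\]
follows cleanly. But the step you call ``the crux''---passing from these local inequalities to $\sum_v y_v \leq \gall'(G)$---is not a detail to be filled in; it is the entire content of the theorem. Your local bounds control only the $y$-mass inside a closed neighbourhood (or the union of two adjacent ones), and there is no mechanism in your outline for controlling $y$-mass outside that set. Summing or averaging your inequalities over vertices or edges produces bounds of the form $\sum_u (d(u)+1)y_u \leq \sum_u \gamma'_\ell(u)$, which are far too weak. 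Note that even Theorem~\ref{thm:frac2} itself does not follow from $y(\tilde N(v))\leq \gamma'_\ell(v)$ by any such combination, so using Theorem~\ref{thm:frac1} as a black box via duality already fails one level down. Your final suggestion---find a neighbour $u$ of an extremal $v^*$ with $\gamma'_\ell(u)+\gamma'_\ell(v^*)\geq 2\chi_f(G)$---is precisely the statement to be proved, so it is circular.

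The paper's proof takes a completely different route. It does not use duality at all; instead it runs Reed's iterative fractional colouring procedure (uniformly weight all maximum stable sets, delete saturated vertices, repeat) and analyses it with a new probabilistic lemma (Lemma~\ref{lem:expectation}) bounding $\E(|S\cap N(u,v)|)$ for a uniformly random maximum stable set $S$ and an edge $uv$. This lemma is the two-vertex analogue of Lemma~\ref{lem:exp1}. The algorithm supplies exactly the local-to-global bridge your approach lacks: one tracks the total weight $T$ used and shows, via Lemma~\ref{lem:expectation} summed over iterations, that if $T$ reaches $\gall'(G)$ while some vertex $v$ is unsaturated, then the accumulated weight on $N(u,v)$ for a suitable neighbour $u$ exceeds $|N(u,v)|$, a contradiction. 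The argument splits into two cases according to whether $v$ still has an unsaturated neighbour at termination.
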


After proving this theorem, we will prove that Conjecture \ref{con:3} holds for graphs with no stable set of size 3.  We then prove that Conjecture \ref{con:3} holds for line graphs and quasi-line graphs.  The proofs closely follow the proofs of the Local Reed's Conjecture for the corresponding graph classes, which appear in \cite{chudnovskykps12} and \cite{kingthesis}.

Before proving Theorem \ref{thm:frac3}, we describe our original motivation.  In \cite{edwardsk13} we bound the fractional chromatic number of $K_\Delta$-free graphs.  Our approach is to find a partial fractional colouring of one type, then use an extension of Theorem \ref{thm:frac2} as a ``finishing blow'' to complete the colouring.  So the question naturally arises: can we strengthen the finishing blow?  Although Theorem \ref{thm:frac3} does not improve the results given in \cite{edwardsk13}, we feel that Conjecture \ref{con:3} is of greater interest. In Section \ref{sec:conc} we discuss possible extensions of Theorem \ref{thm:frac2} that would in fact strengthen these previous results.

\section{Proving the fractional relaxation}\label{sec:frac}

The proofs of Theorems \ref{thm:frac1}, \ref{thm:frac2}, and \ref{thm:frac3} all rely on the same natural fractional colouring algorithm, originally due to Reed \cite{molloyrbook}: we add equal weight to every maximum stable set until a vertex is completely coloured, then we discard all completely coloured vertices and continue the process, respecting the fact that discarding vertices changes the set of maximum stable sets.  Improving the bounds we get is merely a matter of refining the analysis.  Before describing this process in greater detail we give some requisite definitions.

For a graph $G$ and a nonnegative rational $k$, a {\em fractional $k$-colouring} of $G$ is a nonnegative weighting $w$ on the stable sets of $G$ such that $\sum_{S}w(S) \leq k$, and for every vertex $v$, $\sum_{S \ni v} w(S) = 1$.  The {\em fractional chromatic number} of $G$, written $\chi_f(G)$, is the smallest $k$ for which $G$ has a fractional vertex $k$-colouring.

The proof of Theorem \ref{thm:frac2} relies on the following lemma, whose proof appears in \S2.2 of \cite{kingthesis}.

\begin{lemma}\label{lem:exp1}
Let $S$ be a maximum stable set of $G$ chosen uniformly at random.  Then for any vertex $v$, $\E(|S\cap N(v)|) \geq 2-(\omega(v) + 1)\p(v\in S)$.
\end{lemma}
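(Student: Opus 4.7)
The plan is to partition max stable sets by their interaction with $\tilde N(v)$ and to bound the ``shortfall'' via a swap argument. Setting $p = \p(v \in S)$ and $q_i = \p(v \notin S \text{ and } |S \cap N(v)| = i)$, maximality of $S$ forces $q_0 = 0$: if $v \notin S$ and $v$ had no neighbour in $S$, then $S \cup \{v\}$ would be stable and larger. Separating the $i=1$ term gives
$$\E(|S \cap N(v)|) = \sum_{i \geq 1} i q_i \;\geq\; q_1 + 2\sum_{i \geq 2} q_i \;=\; 2(1-p) - q_1,$$
so it suffices to prove $q_1 \leq (\omega(v)-1)\,p$.

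To establish this I would double-count pairs $(S,u)$ where $S$ is a max stable set with $v \notin S$ and $S \cap N(v) = \{u\}$. The natural map $\phi(S,u) := (S \setminus \{u\}) \cup \{v\}$ produces a max stable set containing $v$: it is stable since $u$ was $v$'s unique neighbour in $S$, and it has the same cardinality as $S$. There are $q_1 |\mathcal S|$ such pairs in total (where $\mathcal S$ is the set of max stable sets) and $p |\mathcal S|$ possible images, so the job reduces to bounding the preimage count of a single $S' \ni v$.

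The key, and really the only substantive step, is showing that each $S' \ni v$ has at most $\omega(v)-1$ preimages under $\phi$. A preimage corresponds to a choice of $u \in N(v)$ such that $(S' \setminus \{v\}) \cup \{u\}$ is stable, i.e.\ $u$ is non-adjacent to every vertex of $S' \setminus \{v\}$. I would argue that any two such candidates $u_1, u_2$ must be adjacent: otherwise $(S' \setminus \{v\}) \cup \{u_1, u_2\}$ would be a stable set of size $|S'|+1$, contradicting maximality of $S'$. Hence the candidates together with $v$ form a clique inside $\tilde N(v)$, yielding at most $\omega(v)-1$ preimages per $S'$. Double counting then gives $q_1 \leq (\omega(v)-1)p$, and plugging back into the earlier inequality yields $\E(|S \cap N(v)|) \geq 2 - 2p - (\omega(v)-1)p = 2 - (\omega(v)+1)p$, as required.
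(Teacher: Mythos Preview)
Your argument is correct and is precisely the standard swap/double-counting proof of this lemma. The paper does not supply its own proof but refers to \S2.2 of \cite{kingthesis}; the argument there (originating with Reed) is exactly what you wrote: condition on $|S\cap N(v)|$, and bound the $|S\cap N(v)|=1$ case by mapping each such $S$ to $(S\setminus\{u\})\cup\{v\}$ and noting that the valid swap-vertices $u$ for a fixed image, together with $v$, form a clique in $\tilde N(v)$.
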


Before proving Theorem \ref{thm:frac3} we need an easy generalization.  For adjacent vertices $u$ and $v$ we define $N(u,v)$ as $(N(u)\cup N(v))\setminus\{u,v\}$.

\begin{lemma}\label{lem:expectation}
Let $S$ be a maximum stable set of $G$ chosen uniformly at random.  Then for any adjacent vertices $u$ and $v$,
\begin{equation}\label{eq:exp}
\E(|S\cap N(u,v)|) \geq 4- (\omega(v) + 2)\p(v\in S)-(\omega(u)+2)\p(u\in S)  - \sum_{w\in N(v) \cap N(u)}\p(w\in S).
\end{equation}
\end{lemma}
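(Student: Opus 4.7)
The plan is to derive this bound by applying Lemma \ref{lem:exp1} separately to $u$ and to $v$, summing, and then converting the resulting sum of counts over $N(u)$ and $N(v)$ into a count over $N(u,v)$ via inclusion-exclusion. The key observation that makes this work is that $u$ and $v$ are adjacent, so $u \in N(v)$ and $v \in N(u)$, but by construction neither lies in $N(u,v)$.

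Concretely, I would first write down the two instances of Lemma \ref{lem:exp1}:
\begin{align*}
\E(|S\cap N(u)|) &\geq 2-(\omega(u)+1)\p(u\in S),\\
\E(|S\cap N(v)|) &\geq 2-(\omega(v)+1)\p(v\in S),
\end{align*}
and add them to get a lower bound of $4 - (\omega(u)+1)\p(u\in S) - (\omega(v)+1)\p(v\in S)$ on $\E(|S\cap N(u)| + |S\cap N(v)|)$. Next I would rewrite the left-hand side using the identity $|S\cap N(u)| + |S\cap N(v)| = |S\cap (N(u)\cup N(v))| + |S\cap (N(u)\cap N(v))|$.

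Now comes the step I expect to be the one to get right: carefully accounting for what $N(u)\cup N(v)$ contains. Since $uv\in E(G)$, both $u$ and $v$ lie in $N(u)\cup N(v)$, but neither lies in $N(u,v) = (N(u)\cup N(v))\setminus\{u,v\}$; the union is therefore the disjoint union of $N(u,v)$ and $\{u,v\}$. Hence $|S\cap (N(u)\cup N(v))| = |S\cap N(u,v)| + |S\cap\{u,v\}|$, and taking expectations, $\E(|S\cap\{u,v\}|) = \p(u\in S) + \p(v\in S)$ while $\E(|S\cap(N(u)\cap N(v))|) = \sum_{w\in N(u)\cap N(v)}\p(w\in S)$ by linearity.

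Substituting these identities into the summed bound and solving for $\E(|S\cap N(u,v)|)$ yields precisely inequality \eqref{eq:exp}: the extra $\p(u\in S) + \p(v\in S)$ term moves to the right-hand side, combining with the original $(\omega(u)+1)\p(u\in S) + (\omega(v)+1)\p(v\in S)$ to produce the coefficients $(\omega(u)+2)$ and $(\omega(v)+2)$, while the common-neighbour term $\sum_{w\in N(u)\cap N(v)}\p(w\in S)$ appears with a negative sign as desired. No probabilistic work beyond what Lemma \ref{lem:exp1} already supplies is needed; the content of the generalization is entirely the disjoint-union bookkeeping for $N(u)\cup N(v)$.
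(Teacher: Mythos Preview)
Your proposal is correct and follows essentially the same approach as the paper: both apply Lemma~\ref{lem:exp1} to $u$ and $v$, add, and use inclusion--exclusion to pass from $N(u),N(v)$ to $N(u,v)$. The only cosmetic difference is that the paper packages the correction term as $\E(|S\cap \tilde N(u)\cap \tilde N(v)|)$, whereas you keep $\{u,v\}$ and $N(u)\cap N(v)$ separate; since $\tilde N(u)\cap \tilde N(v)=\{u,v\}\cup(N(u)\cap N(v))$ for adjacent $u,v$, the two computations are identical.
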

\begin{proof}
We know by Lemma \ref{lem:exp1} that 
\begin{eqnarray}
\label{eq:nbrv}
\E(|S\cap N(v)|) &\geq& 2-(\omega(v) + 1)\p(v\in S)\textrm{ and}\\
\label{eq:nbru}
\E(|S\cap N(u)|) &\geq& 2-(\omega(u) + 1)\p(u\in S).
\end{eqnarray}
By linearity of expectation we have
\begin{equation}
\label{eq:lin}
\E(|S\cap N(u,v)|) = \E(|S\cap N(u)|) + \E(|S\cap N(v)|) - \E(|S\cap \tilde N(u)\cap \tilde N(v)|).
\end{equation}
Also by  linearity of expectation, we have
\begin{equation}
\label{eq:nbr}
\E(|S\cap \tilde N(u)\cap \tilde N(v)|) = \p(u\in S) + \p(v\in S ) + \sum_{w\in N(v) \cap N(u)}\p(w\in S).
\end{equation} 
Substituting (\ref{eq:nbrv}), (\ref{eq:nbru}), and (\ref{eq:nbr}) into (\ref{eq:lin}) gives us (\ref{eq:exp}).
\end{proof}

We are now ready to prove Theorem \ref{thm:frac3}.

\begin{proof}[Proof of Theorem \ref{thm:frac3}]We fractionally colour $G$ using the following iterative method.

\begin{enumerate}
\item Set $w(S) = 0$ for every $S \in \fs$.  Set $G_0 = G$.  Set $i=0$.

Set $T=0$.  $T$ stands for total weight used.

For each $v \in V$, set $\textit{wo}_v = 0$ ($\textit{wo}$ stands for weight on).

\item If $V(G_i) = \emptyset$ or $T = \gall'(G)$ then stop.

\item For each vertex $v$ of $G_i$, let $p_i(v)$ be the probability that $v$ is in a uniformly random maximum stable set of $G_i$.  Set $\textit{low} = \min\{\frac{1-\textit{wo}_v}{p_i(v)}|v \in V(G_i) \}$.  Set $\textit{val}_i = \min(\textit{low}, \gall'(G) - T)$. 

\item Let $\fs_i$ be the set of maximum stable sets of $G_i$.  For each stable set in $\fs_i$, increase $w(S)$ by $\frac{\textit{val}_i}{|\fs_i|}$.  For each vertex $v$ of $G_i$, increase $\textit{wo}_v$ by $p_i(v)\textit{val}_i$.  Increase $T$ by $\textit{val}_i$.

\item Let $G_{i+1}$ be the graph induced by those vertices $v$ which satisfy $\textit{wo}_v<1$.  Increment $i$ and go to Step 2.

\end{enumerate}

Our choice of $\textit{val}_i$ ensures two things:  that $T$ never exceeds $\gall'(G)$, and that if the $i$th iteration is not the last, then $V(G_{i+1})$ is properly contained in $V(G_i)$.  Thus the algorithm must terminate.

We claim that at the end of the procedure, the $w(S)$ weights give a fractional $\gall'(G)$-colouring.  It is easy to show by induction that at the end of each iteration and for every $v \in V$, $\textit{wo}_v = \sum_{\{S \in \fs | v \in S\}}w(S)$ and $T = \sum_{S\in \fs}w(S)$.  The definitions of $\textit{low}$ and $\textit{val}_i$ ensure that no $\textit{wo}_v$ is ever more than 1.  We stop if $V(G_i) = \emptyset$ or $T = \gall'(G)$; in the first case we know that we have the desired fractional colouring.  We must now show that the same is true in the second case.  It suffices to show that in this case, each $\textit{wo}_v = 1$.

So assume that for some $v$ we have $\textit{wo}_v <1$ when we complete the process.  For each vertex $u$ and iteration $i$, denote by $a_i(u)$ the amount by which $\textit{wo}_u$ was augmented in iteration $i$, i.e.\ $a_i(u) = \textit{val}_ip_i(u)$. There are two cases; we will show that each results in a contradiction.

\vspace{.5em}\noindent{\bf Case 1}: $v$ has a neighbour $u$ with $wo_u < 1$.

In this case $\{u,v\} \subseteq V(G_i)$ for every $i$. For every $i$, let $S$ be a maximum stable set drawn at random from $\mathcal S_i$. Then by Lemma \ref{lem:expectation},
$$ val_i\E(|S\cap N(u,v)|) = \sum_{x\in N(u,v)}a_i(x) \geq 4val_i - (\omega(v) + 2)a_i(v) - (\omega(u) + 2)a_i(u) - \sum_{w\in N(u)\cap N(v)}a_i(w) $$
Summing over all iterations,

	\begin{align*}
		\sum_{x\in N(u,v)}wo_x &\geq 4T - (\omega(v) + 2)wo_v - (\omega(u) + 2)wo_u - \sum_{w\in N(u)\cap N(v)}wo_w \\
		&> \omega(u) + \omega(v) + d(u) + d(v) + 2 - (\omega(v) + 2) - (\omega(u) + 2) - |N(u)\cap N(v)| \\
		&= d(u) + d(v) - |N(u)\cap N(v)| - 2 = |N(u,v)|,
	\end{align*}
a contradiction since $wo_x \leq 1$ for each $x \in N(u,v)$.

\vspace{.5em}\noindent{\bf Case 2}: Every neighbour $u$ of $v$ has $wo_u = 1$ at the end of the procedure.

For every neighbour $u$ of $v$ there exists some $j$ such that $u\in V(G_j)$ but $u\notin V(G_{j+1})$.  Choose $u$ maximizing $j$; this implies that $N_{G_i}(v) = \emptyset$ for all $i>j$, and consequently $a_i(v) = val_i$ for each $i>j$.
When $i\leq j$ we again have  $$ \sum_{x\in N(u,v)}a_i(x) \geq 4val_i - (\omega(v) + 2)a_i(v) - (\omega(u) + 2)a_i(u) - \sum_{w\in N(u)\cap N(v)}a_i(w) $$ by Lemma \ref{lem:expectation}.
Summing over the iterations up to $j$ we see
\footnotesize
\begin{eqnarray*}
&&\sum_{x\in N(u,v)} \sum_{i\leq j} a_i(x)  \\
&\geq& 4(T- \sum_{i>j}a_i(v)) - (\omega(v) + 2)\sum_{i\leq j}a_i(v) - (\omega(u) + 2)\sum_{i\leq j}a_i(u) - \sum_{w\in N(u)\cap N(v)}\sum_{i\leq j}a_i(w) \\
&=& d(u) + d(v) + \omega(u) + \omega(v) + 2 - 4\sum_{i>j}a_i(v) - (\omega(v) + 2)\sum_{i\leq j}a_i(v) - (\omega(u) + 2)\sum_{i\leq j}a_i(u) - \sum_{w\in N(u)\cap N(v)}\sum_{i\leq j}a_i(w) \\
&\geq& d(u) + d(v) + \omega(u) + \omega(v) + 2 - (\omega(v) + 2)\sum_{i}a_i(v) - (\omega(u) + 2)wo_u - \sum_{w\in N(u)\cap N(v)}wo_w \\
&>& d(u) + d(v) - |N(u)\cap N(v)| - 2 = |N(u,v)|,
\end{eqnarray*}
\normalsize
where the third inequality follows since $\omega(v) +2\geq 4$. This is a contradiction as $wo_x \leq 1$ for each $x \in N(u,v)$.

It follows that for every $v \in V(G)$, $\textit{wo}_v =1$.  This completes the proof.
\end{proof}

\section{Some easy integer colouring cases}

Theorem \ref{thm:frac3} puts Conjecture \ref{con:3} within reach for several classes of graphs.  For circular interval graphs (see \cite{kingthesis} for a definition), the result is an immediate consequence of the {\em round-up property} proved by Niessen and Kind \cite{niessenk00}:

\begin{theorem}
For any circular interval graph $G$, $\chi(G) = \lceil \chi_f(G)\rceil$.
\end{theorem}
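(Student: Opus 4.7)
The plan is to exploit the structural theory of circular interval graphs to reduce the round-up claim to an explicit construction. Fix a representation of $G$: vertices $v_0,\dots,v_{n-1}$ arranged in circular order on the unit circle, together with a finite family of closed arcs, such that $v_iv_j \in E(G)$ iff both endpoints lie in a common arc. This representation controls both the maximal cliques (each consists of vertices contained in a single arc, hence forms a circular interval in the vertex ordering) and the stable sets (vertices whose pairwise placement escapes every arc).

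The first step is to compute $\chi_f(G)$ in closed form. Since circular interval graphs are quasi-line, I would invoke the polyhedral description of the stable set polytope due to Eisenbrand, Oriolo, Stauffer and Ventura, which simplifies drastically in the circular setting. In particular, one obtains $\chi_f(G) = \max\{\omega(G),\ \max_W |W|/\alpha(G|W)\}$, where $W$ ranges over certain circular ``windows'' in the arc representation. The second step is to split into cases based on which term attains the maximum. If $\chi_f(G) = \omega(G)$ then $\chi_f$ is already an integer and one gets a proper $\omega(G)$-coloring by rotating a rainbow pattern around the circle. Otherwise a non-trivial window $W$ achieves some ratio $p/q > \omega(G)$, and the task reduces to constructing a circular $(p,q)$-homomorphism from $G$ into a circular clique on $p$ vertices with connection parameter $q$; composing this with the trivial $\lceil p/q \rceil$-coloring of the circular clique completes the construction.

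The main obstacle is this final step: producing a circular homomorphism into the appropriate circular clique from the structural data of $G$. I would attack it by taking a stable set realizing $\alpha(G|W)$ and shifting it rotationally around the circle, exploiting the fact that a tight window forces a local periodicity on $G$ that can be made to close up globally. This is precisely where the specificity of circular interval graphs -- as opposed to general quasi-line graphs or general circular arc graphs -- carries the argument, since any proof must somehow encode the rigidity coming from the circular arc representation. A cleaner alternative would be to prove the stable set polytope of $G$ has the integer round-up property directly, via a flow-based LP rounding on the circular representation; I expect this to be the shortest path if the polyhedral description mentioned above is taken as a black box.
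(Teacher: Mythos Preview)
The paper does not prove this theorem. It is quoted as the ``round-up property'' of Niessen and Kind \cite{niessenk00} and used as a black box to deduce Theorem~\ref{thm:cig}; there is no in-paper argument to compare your sketch against.

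As a standalone attempt, your outline is plausible but has real gaps at both branch points. In the case $\chi_f(G)=\omega(G)$, ``rotating a rainbow pattern around the circle'' is not a proof: circular interval graphs are not vertex-transitive, the vertices are not evenly spaced, and a literal cyclic shift of colours need not stay proper. What you are asserting here is that a proper circular-arc graph with $\chi_f=\omega$ satisfies $\chi=\omega$, which is itself a nontrivial theorem (essentially Tucker's result) and needs an argument. In the other case you correctly locate the crux --- producing a homomorphism into the appropriate circular clique --- but ``shift a tight stable set rotationally and use local periodicity to close up globally'' is exactly the hard step, and you have not supplied it; there is no reason a priori that the shifts tile the vertex set or that the winding matches. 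Your fallback of LP rounding via the Eisenbrand--Oriolo--Stauffer--Ventura description is anachronistic (it postdates Niessen--Kind) and, more importantly, circular: the integer round-up property of that polytope \emph{is} the theorem you are trying to prove, so invoking the description as a black box does not by itself deliver the rounding.
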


\begin{theorem}\label{thm:cig}
For any circular interval graph $G$, $\chi(G)\leq \gall(G)$.
\end{theorem}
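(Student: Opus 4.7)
The plan is that this theorem follows immediately by chaining together two results that are already on the table. By the round-up property of Niessen and Kind (the theorem stated just above), any circular interval graph $G$ satisfies $\chi(G) = \lceil \chi_f(G) \rceil$. By Theorem \ref{thm:frac3} applied to $G$, we have $\chi_f(G) \leq \gall'(G)$. Monotonicity of the ceiling function then gives
\[
\chi(G) = \lceil \chi_f(G) \rceil \leq \lceil \gall'(G) \rceil = \gall(G),
\]
where the last equality is just the definition of $\gall(G)$ given in Section 1.1.

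So the proof is essentially a one-line deduction and there is no real obstacle: the fractional superlocal bound was the hard work (Theorem \ref{thm:frac3}), and the round-up property does the rest for free. The only thing worth double-checking is that no integrality issue sneaks in between $\gall'$ and $\gall$; since $\gall(G)$ is defined precisely as $\lceil \gall'(G) \rceil$, this is automatic. I would simply present the three-step chain above as the proof.
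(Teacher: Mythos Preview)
Your argument is correct and is exactly the intended one: the paper states that Theorem~\ref{thm:cig} is an immediate consequence of the Niessen--Kind round-up property, and the chain $\chi(G)=\lceil \chi_f(G)\rceil \le \lceil \gall'(G)\rceil = \gall(G)$ via Theorem~\ref{thm:frac3} is precisely that deduction.
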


Circular interval graphs are a fundamental subclass of quasi-line graphs, which are themselves a fundamental subclass of claw-free graphs -- see \cite{cssurvey} for an explanation.  Since Reed's Conjecture is known to hold for claw-free graphs, we might hope that the same is true for Conjecture \ref{con:3}.  Although there are still some claw-free graphs for which Conjecture \ref{con:2} has not been proven, we hope to prove the superlocal Reed's Conjecture for substantial subclasses of claw-free graphs.  We continue by naming another easy victim: graphs with stability number at most two.

\begin{theorem}
Any graph $G$ satisfying $\alpha(G)\leq 2$ also satisfies $\chi(G)\leq \gall(G)$.
\end{theorem}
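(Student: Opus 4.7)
The plan is to use the identity $\chi(G) = n - \mu(\bar G)$, valid when $\alpha(G) \leq 2$, where $\mu(\bar G)$ is the maximum matching number of the (triangle-free) complement. The case $\alpha(G) = 1$ is immediate since then $G$ is complete and $\chi = n = \gall$; so I assume $\alpha(G) = 2$, and in this setting proper colourings of $G$ correspond to partitions of $V(G)$ into matching edges and singletons of $\bar G$, giving the identity.

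First I would set up: let $M$ be a maximum matching in $\bar G$ and $U = V(G) \setminus V(M)$ the unmatched set, of size $n - 2\mu$; note that $U$ is a clique in $G$, since two unmatched vertices non-adjacent in $G$ would form a $\bar G$-edge extending $M$. The key structural observation is that for every $\{x,y\} \in M$, at least one of $x, y$ is $G$-adjacent to every vertex of $U$: otherwise $x$ has a $\bar G$-neighbour $u \in U$ and $y$ has a $\bar G$-neighbour $u' \in U$, and either $u = u'$ yields a triangle $u, x, y$ in $\bar G$, or $u \ne u'$ yields an $M$-augmenting path $u, x, y, u'$ in $\bar G$. Let $Y$ be the set of matched vertices $G$-adjacent to all of $U$; then $|Y| \geq \mu$, and since $U$ together with any clique in $G[Y]$ is a clique in $G$, we have $\omega(G) \geq |U| + \omega(G[Y])$.

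Now I would split into two cases. If $\omega(G) \geq \chi(G) = n - \mu$---in particular whenever $Y$ itself is a clique, as then $\omega(G) \geq |U|+|Y| \geq n-\mu$---then any edge $uv$ in a maximum clique satisfies $\omega(u) = \omega(v) = \omega(G)$ and $d(u), d(v) \geq \omega(G) - 1$, yielding $\gall(uv) \geq \omega(G) \geq \chi(G)$ and we are done. Otherwise $t := \omega(G[Y]) \leq \mu - 1$, and I would pick $u \in U$ together with $v$ in a maximum clique of $G[Y]$; then $uv \in E(G)$ and $\omega(u), \omega(v) \geq |U| + t$, so a direct calculation reduces $\gall(uv) \geq \chi(G)$ to the inequality $d_{\bar G}(u) + d_{\bar G}(v) \leq 2t + 3$. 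The $\bar G$-degrees are each bounded by $\mu$ because $N_{\bar G}(\cdot)$ is a stable set in the triangle-free $\bar G$ and so meets each matching edge at most once; moreover $N_{\bar G}(u)$ consists only of matched vertices outside $Y$, since every vertex of $Y$ is $G$-adjacent to $U$ and any unmatched-to-unmatched $\bar G$-edge would extend $M$. The hard part will be pushing this inequality through when $t$ is substantially below $\mu$: here I would combine Ramsey-type restrictions on $G[Y]$ (which has $\alpha \leq 2$, $\omega \leq \mu-1$, and $|Y| \geq \mu$) with finer augmenting-path arguments to further bound $d_{\bar G}(u) + d_{\bar G}(v)$.
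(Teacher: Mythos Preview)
Your approach is genuinely different from the paper's, but it is incomplete in a way you yourself flag. The paper's argument runs as follows: take a minimum counterexample $G$, which is vertex-critical; since $\alpha(G)\le 2$, colourings correspond to matchings of $\bar G$, and the Edmonds--Gallai structure theorem forces $\bar G$ to be a disjoint union of factor-critical components. Either $\bar G$ is disconnected, so $G$ is a join $G_1+G_2$ and one inducts after checking $\gall(G_1+G_2)\ge\gall(G_1)+\gall(G_2)$; or $\bar G$ is connected and factor-critical, in which case $\chi(G)=\lceil\chi_f(G)\rceil$, and then Theorem~\ref{thm:frac3} gives $\chi_f(G)\le\gall'(G)$, hence $\chi(G)\le\lceil\gall'(G)\rceil=\gall(G)$. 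The whole point is that the superlocal fractional relaxation, already proved, does the heavy lifting.

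Your route instead tries to locate by hand an edge $uv$ with $\gall(uv)\ge\chi(G)$, via the matching structure of $\bar G$. The reductions you make are correct as far as they go, but the final inequality $d_{\bar G}(u)+d_{\bar G}(v)\le 2t+3$ is left as ``the hard part,'' and the suggestion to combine Ramsey-type restrictions on $G[Y]$ with augmenting-path arguments is not a proof---no concrete mechanism is given, and it is not evident that this inequality holds for your particular choice of $u$ and $v$. There is also a structural gap earlier: you never reduce to a vertex-critical $G$, so you have not excluded $U=\emptyset$ (take $\bar G$ equal to the Petersen graph), nor have you isolated the join case that arises when $\bar G$ is disconnected. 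Once you do make those reductions you land exactly in the connected factor-critical situation, where $\chi=\lceil\chi_f\rceil$---and then invoking Theorem~\ref{thm:frac3} is far cleaner than trying to push your inequality through. In short: your setup is sound, but you are re-deriving the factor-critical case of the fractional bound from scratch; use it instead.
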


In this case a colouring of $G$ corresponds to a matching in the complement of $G$, so we have a wealth of knowledge at hand.  The proof of this theorem is actually an easy exercise, and follows almost exactly the proof of Theorem 2.15 in \cite{kingthesis}.  The Edmonds-Gallai structure theorem \cite{edmonds65b, gallai59} implies that a minimum counterexample, which must be vertex-critical, either satisfies $\chi(G) = \lceil\chi_f(G)\rceil$ or has a disconnected complement.  Thus the only work we need to do, after replacing Theorem \ref{thm:frac2} with Theorem \ref{thm:frac3}, is to prove that if $G$ is the join of graphs $G_1$ and $G_2$, then $\gall(G) \geq \gall(G_1)+\gall(G_2)$.  We leave the details to the reader.

Having exhibited the usefulness of Theorem \ref{thm:frac3} in bounding the chromatic number, we move on to something a little more challenging: a class of graphs for which $\chi$ and $\chi_f$ are believed, but not known, to differ by at most 1.

\section{Colouring line graphs with $\gall(G)$ colours}

In this section we consider line graphs of multigraphs.  As we do, we bear in mind the famous Goldberg-Seymour conjecture \cite{goldberg73,seymour79}, which proposes that every line graph $G$ satisfies $\chi(G) \leq \chi_f(G)+1$.  Kahn \cite{kahn96} proved that this bound holds asymptotically.  The approach used to prove Conjecture \ref{con:1} for line graphs \cite{kingrv07} was no help in proving Conjecture \ref{con:2}.  We therefore appeal to Vizing fans, which were the key to proving Conjecture \ref{con:2} for line graphs \cite{chudnovskykps12}.  The extension of this proof is fairly straightforward.

In order to prove Conjecture \ref{con:3} for line graphs, we prove an equivalent statement in the setting of edge colourings of multigraphs.  Given distinct adjacent vertices $u$ and $v$ in a multigraph $G$, we let $\mu_G(uv)$ denote the number of edges between $u$ and $v$.  We let $t_G(uv)$ denote the maximum, over all vertices $w$ forming a triangle with $ \{u,v\}$, of the number of edges with both endpoints in $\{u,v,w\}$.  That is,
$$t_G(uv) := \max_{w\in N(u)\cap N(v)}\left(\mu_G(uv)+\mu_G(uw)+\mu_G(vw)\right).$$
We omit the subscripts when the multigraph in question is clear.

Observe that given an edge $e$ in $G$ with endpoints $u$ and $v$, the degree of $e$ in $L(G)$ is $d(u)+d(v)-\mu(uv)-1$.  And since any clique in $L(G)$ containing $e$ comes from the edges incident to $u$, the edges incident to $v$, or the edges in a triangle containing $u$ and $v$, we can see that $\omega(e)$ in $L(G)$ is equal to $\max \{ d(u), d(v), t(uv) \}$.  Therefore we prove the following theorem, which is equivalent to proving Conjecture \ref{con:3} for line graphs:

\begin{theorem}\label{thm:line}
Let $G$ be a multigraph and let
\small
\begin{eqnarray}
\bar \gall(G) := \bigg\lceil \tfrac 12\max_{uv, vw\in E(G)} \big\{ &&d(u) + \tfrac 12(d(v) - \mu(uv)) + d(v) + \tfrac 12 (d(w)-\mu(vw)), \\ 
												&&d(u) + \tfrac 12(d(v) - \mu(uv)) + d(w) + \tfrac 12 (d(v)-\mu(vw)), \label{eqn:2}\\ 
												&&d(u) + \tfrac 12(d(v) - \mu(uv)) + \tfrac 12(d(v)+d(w)-\mu(vw)+t(vw)), \\ 
												&&d(v) + \tfrac 12(d(u) - \mu(uv)) + d(v) + \tfrac 12 (d(w)-\mu(vw)), \label{eqn:4} \\
												&&d(v) + \tfrac 12(d(u) - \mu(uv)) + d(w) + \tfrac 12 (d(v)-\mu(vw)), \\
												&&d(v) + \tfrac 12(d(u) - \mu(uv)) + \tfrac 12(d(v)+d(w)-\mu(vw)+t(vw)), \\
												&&\tfrac 12 (d(u)+d(v)-\mu(uv)+t(uv)) + d(v) + \tfrac 12 (d(w)-\mu(vw)),\\
												&&\tfrac 12 (d(u)+d(v)-\mu(uv)+t(uv)) + d(w) + \tfrac 12 (d(v)-\mu(vw)),\\
												&&\tfrac 12 (d(u)+2d(v)+d(w)-\mu(uv)+t(uv) -\mu(vw)+t(vw)) \label{eqn:9}\\ 
												\big\} \bigg\rceil. \nonumber
\end{eqnarray}

 Then $\chi'(G) \leq \bar \gall(G)$. 
\end{theorem}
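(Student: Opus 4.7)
The plan is to mimic the proof of the local Reed's Conjecture for line graphs in \cite{chudnovskykps11}, pushing its Vizing-fan analysis one step further so that two adjacent edges, rather than a single edge, absorb the blame at the end.

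I would take $G$ to be a minimum counterexample to $\chi'(G) \leq \bar\gall(G)$ (minimising $|V(G)|+|E(G)|$), and set $k = \chi'(G)$. Minimality forces $G$ to be edge-critical: for any edge $e = uv$, the multigraph $G - e$ admits a proper $(k-1)$-edge-colouring $\phi$, and the sets $M(u), M(v)$ of colours missing at $u$ and at $v$ in $\phi$ must be disjoint (otherwise we could extend $\phi$ to $G$). This already yields $d(u) + d(v) \geq k+1$ for every edge $uv$, a baseline that we will improve.

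Following \cite{chudnovskykps11}, I would fix such an edge $e = uv$ and colouring $\phi$, and build a Vizing fan at $u$: a maximal sequence of distinct neighbours $v = v_0, v_1, \ldots, v_p$ of $u$ with $\phi(uv_{i+1}) \in M(v_i)$ for each $i < p$. Maximality together with the standard Vizing recolouring arguments (adapted to multigraphs) forces $M(v_p) \subseteq \{\phi(uv_1), \ldots, \phi(uv_p)\}$ and similar containments; these translate into concrete degree and multiplicity bounds at $u$ and at $v_0, \ldots, v_p$, as well as along any triangle $uv_iv_{i'}$ arising in the fan. The local proof converts these bounds into $2\gamma'_\ell(f) \geq 2k-1$ for a single edge $f = uv_i$; to obtain the superlocal version I would use the same inequalities to exhibit two adjacent edges $f_1, f_2$ with $2\gall'(f_1 f_2) \geq 2k-1$. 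The natural candidates are $f_1 = uv_i$ and $f_2 = uv_j$ (automatically sharing vertex $u$), or $f_1 = uv_i$ and $f_2 = v_iw$ when the fan closes through a triangle at $v_i$. The nine alternatives in $\bar\gall$ correspond exactly to the $3 \times 3$ choices of ``clique witness'' for $\omega_L(f_1)$ and $\omega_L(f_2)$ --- each chosen from a degree at one endpoint, a degree at the other, or a triangle weight $t$ --- and the goal is to verify that whichever configuration the fan ends in, at least one of the nine rows of $\bar\gall$ is saturated.

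The main obstacle is this final case analysis. The local proof essentially spends all of its fan slack on a single edge, and spending it on two adjacent edges requires tracking which clique witness is active at each of $f_1$ and $f_2$ simultaneously. The split naturally depends on whether the fan terminates in a high-multiplicity edge (activating a $\mu(uv)$ term), in a triangle through $u$ or through a fan vertex (activating a $t(uv)$ or $t(vw)$ term), or simply in a long degree-greedy sequence. Matching each such ending to one of the rows (\ref{eqn:2})--(\ref{eqn:9}) of $\bar\gall$ is the most delicate bookkeeping; once the correct pair $f_1, f_2$ and the correct row are identified, the arithmetic should reduce to applying the baseline disjointness $|M(u)| + |M(v)| \leq k-1$ twice, so no genuinely new inequality beyond what the local proof already establishes is required.
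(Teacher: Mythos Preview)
Your plan misreads the structure of the local proof in \cite{chudnovskykps11}, and consequently underestimates what the superlocal version needs.  The local argument does \emph{not} extract its bound from a single Vizing fan; it first shows that in a minimum counterexample every maximal fan has size exactly $2$, and then, because a lone size-$2$ fan carries too little information, it iterates: uncolouring an edge of the fan, recolouring, and building a new size-$2$ fan at the next vertex, until the process closes up into an odd cycle $v_0,\ldots,v_{j-1}$ on which two colours $\alpha_0,\alpha_1$ alternate.  The contradiction then comes from a global count on that cycle (in this paper, row~(\ref{eqn:9}) handles $j=3$ and an averaging over the cycle using row~(\ref{eqn:2}) handles $j\ge 5$).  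Your proposal to locate two adjacent edges $f_1=uv_i$, $f_2=uv_j$ inside one fan is exactly what the paper uses to rule out fans of size $\ge 3$ (Lemma~\ref{lem:algo3}); but once that is done you are left with size-$2$ fans, and for those the single-fan inequality $2k+2\le d(v_1)+d(v_2)+\mu(v_0v_1)+\mu(v_0v_2)$ is too weak to force any of the nine rows above $2k-1$ when multiplicities are large.

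So the genuine gap is the missing cycle-of-fans step.  Your outline would correctly dispose of the $\ell\ge 3$ case (and indeed the paper's proof of Lemma~\ref{lem:algo3} is precisely ``pick adjacent edges $v_0v_i,v_0v_{i+1}$ and invoke row~(\ref{eqn:2})''), but it offers nothing for $\ell=2$, which is the case that actually survives.  The remedy is not more clique-witness bookkeeping on one fan but the iterative recolouring that threads size-$2$ fans into a cycle; once you have that cycle, the nine rows are used only through (\ref{eqn:2}) and (\ref{eqn:9}), and the arithmetic is indeed routine.
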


\noindent{\bf Remark:} One can turn the proof of this theorem into an algorithm as in \cite{chudnovskykps12}, yielding an $O(n^2)$ algorithm for $\gall(G)$-colouring a line graph on $n$ vertices.  In fact, what we implicitly prove is that the algorithm presented in $\cite{chudnovskykps12}$ gives a $\gall(G)$-colouring, not just a $\gamma_\ell(G)$-colouring.\\

To prove this theorem we assume that $G$ is a minimum counterexample and investigate $\bar \gall(G)$-edge-colourings of $G-e$ for an edge $e$.  We begin by defining, for a vertex $v$, a {\em fan hinged at $v$}.  Let $e$ be an edge incident to $v$, and let $v_1,\ldots, v_\ell$ be a set of distinct neighbours of $v$ with $e$ between $v$ and $v_1$.  Let $c:E\setminus \{e\} \rightarrow \{1,\ldots,k\}$ be a proper edge colouring of $G\setminus \{e\}$ for some fixed $k$.  Then $F = (e;c;v;v_1,\ldots, v_\ell)$ is a {\em fan} if for every $j$ such that $2\leq j \leq \ell$, there exists some $i$ less than $j$ such that some edge between $v$ and $v_j$ is assigned a colour that does not appear on any edge incident to $v_i$ (i.e.\ a colour {\em missing} at $v_i$).  We say that $F$ is {\em hinged at $v$}.  If there is no $u \notin \{ v,v_1,\ldots,v_\ell \}$ such that $F'=(e;c;v;v_1,\ldots,v_\ell,u)$ is a fan, we say that $F$ is a {\em maximal fan}.  The {\em size} of a fan refers to the number of neighbours of the hinge vertex contained in the fan (in this case, $\ell$).  These fans generalize Vizing's fans, originally used in the proof of Vizing's theorem \cite{vizing64}.  Given a partial $k$-edge-colouring of $G$ and a vertex $w$, we say that a colour is {\em incident to $w$} if the colour appears on an edge incident to $w$.  We use $\fc(w)$ to denote the set of colours incident to $w$, and we use $\bar\fc(w)$ to denote $[k] \setminus \fc(w)$.

For this section let us call $G$ a minimum counterexample if $\chi'(G)> \bar \gall(G)$ and for every graph $G'$ on fewer edges, $\chi'(G')\leq \bar \gall(G')$. Fans allow us to modify partial $k$-edge-colourings of a graph (specifically those with exactly one uncoloured edge). As a first step towards Theorem \ref{thm:line}, we show that if $G$ is a minimum counterexample and $k = \bar \gall(G)$, then every maximal fan has size $2$. For ease of notation we will denote $\bar \gall(G)$ by $k$ for the remainder of this section. We begin with two simple lemmas that guarantee disjointness of certain colour sets in partial $k$-edge-colourings of $G-e$. These follow from the work of Vizing \cite{vizing64}; for proofs see for example Lemmas 6 and 7 in \cite{chudnovskykps12}.

\begin{lemma}\label{lem:algo1}
Let $G$ be a minimum counterexample, let $e$ be an edge in $G$ and let $c$ be a $k$-edge-colouring of $G-e$. If $F = (e;c;v;v_1,\ldots,v_\ell)$ is a fan, then $\bar \fc(v)\cap \bar \fc(v_j) = \emptyset$ for every $j$.
\end{lemma}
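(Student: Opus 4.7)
The plan is to prove the contrapositive by induction on $j$: assuming some colour $\alpha \in \bar\fc(v) \cap \bar\fc(v_j)$, I construct a proper $k$-edge-colouring of all of $G$, contradicting the minimality of $G$.

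For the base case $j = 1$, the edge $e$ has endpoints $v$ and $v_1$, and $\alpha$ is missing at both, so setting $c(e) := \alpha$ extends $c$ to a proper $k$-edge-colouring of $G$ and we are done.

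For the inductive step $j \geq 2$, the fan condition supplies an index $i < j$ and an edge $f$ between $v$ and $v_j$ whose colour $\beta := c(f)$ is missing at $v_i$. The key move is to recolour $f$ from $\beta$ to $\alpha$, obtaining a new colouring $c'$; this is proper because $\alpha \in \bar\fc(v) \cap \bar\fc(v_j)$. Under $c'$ the colour $\beta$ is now missing at $v$ (since $f$ was the only $v$-edge carrying $\beta$ in $c$, by properness of $c$) and remains missing at $v_i$ (since no edge incident to $v_i$ was touched). Moreover $F' := (e; c'; v; v_1, \ldots, v_i)$ is still a fan, because for each $2 \leq t \leq i$ the edge witnessing the fan condition at $v_t$ lies between $v$ and $v_t \neq v_j$, so its colour is unchanged. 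Applying the inductive hypothesis to $F'$ and $c'$, with common missing colour $\beta$ at $v$ and $v_i$, yields the contradiction.

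The only subtle point — and essentially the entire content of the lemma — is verifying that the single-edge recolouring preserves both propriety and the fan structure on the shortened sequence $v_1, \ldots, v_i$; both reduce to the observation that only the colour of $f = vv_j$ changes, and $f$ plays no role in the fan conditions for indices at most $i$. This is the standard Vizing-fan shifting argument, and as the paper notes the statement is a direct translation of Lemmas~6 and~7 of \cite{chudnovskykps11} into our setting, so no further cleverness is required beyond the bookkeeping above.
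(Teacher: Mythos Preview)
Your argument is correct and is exactly the standard Vizing-fan shifting argument; the paper itself does not give a proof but simply cites Lemmas~6 and~7 of \cite{chudnovskykps11}, which contain precisely this recolour-and-shorten induction. The only detail you leave slightly implicit is that the missing-colour sets $\bar\fc(v_s)$ for $s<i$ are also unchanged under $c'$ (since $f=vv_j$ is not incident to any such $v_s$), but you cover this with your closing remark that $f$ plays no role in the fan conditions for indices at most $i$.
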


\begin{lemma}\label{lem:algo2}
Let $G$ be a minimum counterexample, let $e$ be an edge in $G$ and let $c$ be a $k$-edge-colouring of $G-e$. If $F = (e;c;v;v_1,\ldots,v_\ell)$ is a fan, then for every $i$ and $j$ satisfying $1\leq i<j\leq \ell$, $\bar \fc(v_i)\cap \bar \fc(v_j) = \emptyset$.
\end{lemma}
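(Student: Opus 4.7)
The plan is to argue by contradiction via a Kempe-chain swap followed by a Vizing-style fan shift, mirroring the proof of Lemma 7 in \cite{chudnovskykps11}. Suppose for contradiction that there exist $1\leq i<j\leq\ell$ and a colour $\alpha\in\bar\fc(v_i)\cap\bar\fc(v_j)$. By Lemma \ref{lem:algo1}, $\alpha\in\fc(v)$, so $v$ is incident to an edge of colour $\alpha$. Pick any $\beta\in\bar\fc(v)$ and let $P$ be the maximal path starting at $v$ whose edges alternate between colours $\alpha$ and $\beta$. Since $\alpha\in\fc(v)$ and $\beta\in\bar\fc(v)$, this path begins with an $\alpha$-edge and terminates at some vertex $w\neq v$.

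Because $\alpha$ is missing at both $v_i$ and $v_j$, the path $P$ cannot end at both of them; pick $v_m\in\{v_i,v_j\}$ that is not an endpoint of $P$. Swap the colours $\alpha$ and $\beta$ along $P$. In the resulting partial $k$-edge-colouring, $\alpha$ is now missing at $v$ while still missing at $v_m$, so we have a common free colour at $v$ and $v_m$. At this point I would invoke the standard fan-shift to extend the colouring to all of $E(G)$: following the chain of ``colour missing at an earlier $v_{t'}$'' links guaranteed by the fan definition, iteratively recolour the edges $vv_t$ on the path from $v_m$ back to $v_1$, each time transferring a missing colour from $v_{t'}$ to $vv_t$, freeing the colour that was on $vv_t$ for the next step. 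After the last step the uncoloured edge $e=vv_1$ can be assigned its freed colour, producing a proper $k$-edge-colouring of all of $G$ and contradicting the minimality of $G$.

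The main place where care is needed is verifying that each individual recolouring in the fan shift is legitimate: the colour placed on $vv_t$ must be absent from both endpoints. Absence from $v$ follows because the previous step freed it there, together with Lemma \ref{lem:algo1} which guarantees $\bar\fc(v)\cap\bar\fc(v_{t'})=\emptyset$ whenever $v_{t'}$ appears in the fan; absence from $v_{t'}$ is immediate from the choice of colour. One must also confirm that $m$ is chosen correctly relative to the fan order so that the chain of links reaches $v_1$, but this is a routine consequence of the fan condition and the minimality choice of the offending index pair $(i,j)$. Since none of the bookkeeping depends on the specific value $k=\bar\gall(G)$ versus $k=\gamma_\ell(G)$, the whole argument is a faithful translation of Lemma 7 of \cite{chudnovskykps11}, and the only real step of content is the Kempe exchange that sets up the fan shift.
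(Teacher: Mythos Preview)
Your proposal is correct and takes the same approach as the paper: the paper gives no independent argument here, simply noting that the lemma follows immediately from Lemma~7 of \cite{chudnovskykps11}, and your Kempe-swap-plus-fan-shift sketch is precisely the argument of that lemma. As you observe, nothing in the reasoning depends on the particular value of $k$, so the translation from the local to the superlocal setting is indeed verbatim.
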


We can now prove that no maximal fan has size $1$ or at least $3$.

\begin{lemma}\label{lem:single}
Let $G$ be a minimum counterexample, let $e$ be an edge in $G$ and let $c$ be a $k$-edge-colouring of $G-e$. Let $F=(e;c;v;v_1,v_2,\ldots,v_\ell)$ be a maximal fan. Then $\ell >1$.
\end{lemma}

\begin{proof}
Suppose that $\ell =1$. If $\bar\fc(v)\cap \bar\fc(v_1)$ is nonempty, then $c$ can easily be extended to a $k$-edge-colouring of $G$, so we may assume $\bar\fc(v) \cap \bar\fc(v_1)$ is empty.   Now, $k = \bar \gall(G) \geq d(v_1)$ by (\ref{eqn:4}) and so $\bar{\mathcal{C}}(v_1)$ is nonempty.  Therefore there is a colour in $\bar\fc(v_1)$ appearing on an edge incident to $v$ whose other endpoint, call it $v_2$, is not $v_1$.  Thus $(e;c;v;v_1,v_2)$ is a fan, contradicting the maximality of $F$.
\end{proof}

\begin{lemma}\label{lem:algo3}
Let $G$ be a minimum counterexample, let $e$ be an edge in $G$ and let $c$ be a $k$-edge-colouring of $G-e$. Let $F=(e;c;v;v_1,v_2,\ldots,v_\ell)$ be a maximal fan. Then $\ell <3$.
\end{lemma}

\begin{proof}
Suppose $\ell \geq 3$. Let $v_0$ denote $v$ for ease of notation.  If the sets $\bar\fc(v_0), \bar\fc(v_1), \ldots, \bar\fc(v_\ell)$ are not all pairwise disjoint, then using Lemma \ref{lem:algo1} or Lemma \ref{lem:algo2} we can find a $k$-edge-colouring of $G$, contradicting $\chi'(G)>k$. We therefore assume they are all pairwise disjoint.

The number of missing colours at $v_i$, i.e.\ $|\bar \fc(v_i)|$, is $k-d(v_i)$ if $2\leq i\leq \ell$, and $k-d(v_i)+1$ if $i\in \{0,1\}$.  Since $F$ is maximal, any edge with one endpoint $v_0$ and the other endpoint outside $\{v_0,\ldots,v_\ell\}$ must have a colour not appearing in $\cup_{i=0}^\ell\bar\fc(v_i)$. Therefore 
\begin{equation}
\left( \sum_{i=0}^{\ell} \left( k-d(v_i) \right)\right)+2 + \left( d(v_0) - \sum_{i=1}^{\ell}\mu(v_0v_i) \right) \leq k.
\end{equation}

\begin{equation}
\ell k + 2 - \sum_{i=1}^{\ell}\mu(v_0v_i) \leq \sum_{i=1}^{\ell}d(v_i).
\end{equation}

\noindent But since $k = \bar \gall(G)$ we have for each $i$, 
\begin{equation}
2k \geq d(v_i) + d(v_{i+1}) + d(v_0) - \tfrac 12 (\mu(v_0v_i) + \mu(v_0v_{i+1}))
\end{equation}
by (\ref{eqn:2}), taking indices modulo $\ell$. This tells us that

\begin{eqnarray}
2\ell k &\geq&  \sum_{i=1}^{\ell} \big( {d(v_i) + d(v_{i+1}) + d(v_0) - \tfrac 12(\mu(v_0v_i) + \mu(v_0v_{i+1}))}\big)\\
&=& \sum_{i=1}^{\ell} \big( 2d(v_i) + d(v_0) - \mu(v_0v_i) \big)
\end{eqnarray}

\noindent so we have 

\begin{equation}
\tfrac 12 \sum_{i=1}^{\ell} \big({2d(v_i) + d(v_0) - \mu(v_0v_i)}\big) + 2 - \sum_{i=1}^{\ell}\mu(v_0v_i) \leq \sum_{i=1}^{\ell}d(v_i).
\end{equation}

\noindent But then
\begin{eqnarray*}
2 + \tfrac 12 \ell d(v_0) - \tfrac 32 \sum_{i=1}^{\ell}\mu(v_0v_i) &\leq& 0 \\
\tfrac{\ell}{2}d(v_0) &<& \tfrac 32 d(v_0),
\end{eqnarray*}

\noindent a contradiction since $\ell \geq 3$.
\end{proof}

We are now ready to finish the proof of Theorem \ref{thm:line}. We approach the theorem by constructing a sequence of overlapping fans of size two until we can apply a previous lemma.  If we cannot do this, then our sequence results in a cycle in $G$ and a set of partial $k$-edge-colourings of $G$ with a very specific structure that leads us to a contradiction.

\begin{proof}[Proof of Theorem \ref{thm:line}]
Let $G$ be a minimum counterexample and let $e_0$ be an edge of $G$. Let $c_0$ be a $k$-edge-colouring of $G-e$.

Let $v_0$ and $v_1$ be the endpoints of $e_0$, and let $F_0=(e_0;c_0;v_1;v_0,v_2)$ be a maximal fan whose existence and maximality are guaranteed by Lemmas \ref{lem:single} and \ref{lem:algo3}. 

Let $\bar\fc_0$ denote the set of colours missing at $v_0$ in the partial colouring $c_0$, and take some colour $\alpha_0 \in \bar\fc_0$.  Note that if $\alpha_0$ does not appear on an edge between $v_1$ and $v_2$ then we can find a fan $(e_0;c_0;v_1;v_0,v_2,u)$ of size $3$, contradicting Lemma \ref{lem:algo3}.  So we can assume that $\alpha_0$ does appear on an edge between $v_1$ and $v_2$.

Let $e_1$ denote the edge between $v_1$ and $v_2$ given colour $\alpha_0$ in $c_0$.  We construct a new colouring $c_1$ of $G-e_1$ from $c_0$ by uncolouring $e_1$ and assigning $e_0$ colour $\alpha_0$.  Let $\bar \fc_1$ denote the set of colours missing at $v_1$ in the colouring $c_1$.  Now  let $F_1=(e_1;c_1;v_2;v_1,v_3)$ be a maximal fan.  As with $F_0$, we can assume that $F_1$ exists and is indeed maximal.  The vertex $v_3$ may or may not be the same as $v_0$.

Let $\alpha_1\in\bar\fc_1$ be a colour in $\bar\fc_1$.  Just as $\alpha_0$ appears between $v_1$ and $v_2$ in $c_0$, we can see that $\alpha_1$ appears between $v_2$ and $v_3$.  Now let $e_2$ be the edge between $v_2$ and $v_3$ having colour $\alpha_1$ in $c_1$.  We construct a colouring $c_2$ of $G-e_2$ from $c_1$ by uncolouring $e_2$ and assigning $e_1$ colour $\alpha_1$.  

\begin{figure}[ht]	
	\hspace{-1.2cm}{
	\begin{picture}(0,0)%
\includegraphics{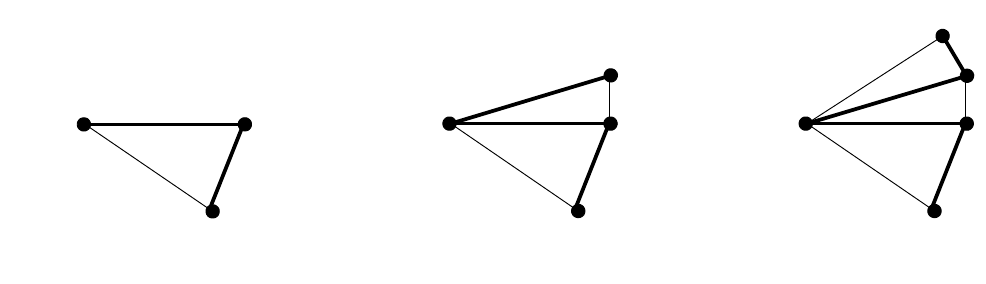}%
\end{picture}%
\setlength{\unitlength}{3947sp}%
\begingroup\makeatletter\ifx\SetFigFont\undefined%
\gdef\SetFigFont#1#2#3#4#5{%
  \reset@font\fontsize{#1}{#2pt}%
  \fontfamily{#3}\fontseries{#4}\fontshape{#5}%
  \selectfont}%
\fi\endgroup%
\begin{picture}(7872,2259)(2986,-1573)
\put(8776,-163){\makebox(0,0)[lb]{\smash{{\SetFigFont{12}{14.4}{\familydefault}{\mddefault}{\updefault}{\color[rgb]{0,0,0}$\alpha_0 \in \bar{\mathcal C_2}$}%
}}}}
\put(4468,-236){\makebox(0,0)[lb]{\smash{{\SetFigFont{12}{14.4}{\familydefault}{\mddefault}{\updefault}{\color[rgb]{0,0,0}$e_0$}%
}}}}
\put(10243,-229){\makebox(0,0)[lb]{\smash{{\SetFigFont{12}{14.4}{\familydefault}{\mddefault}{\updefault}{\color[rgb]{0,0,0}$e_2$}%
}}}}
\put(5926,-160){\makebox(0,0)[lb]{\smash{{\SetFigFont{12}{14.4}{\familydefault}{\mddefault}{\updefault}{\color[rgb]{0,0,0}$\alpha_1 \in \bar{\mathcal C_1}$}%
}}}}
\put(6994,-1495){\makebox(0,0)[lb]{\smash{{\SetFigFont{12}{14.4}{\familydefault}{\mddefault}{\updefault}{\color[rgb]{0,0,0}$F_1$}%
}}}}
\put(7927,-427){\makebox(0,0)[lb]{\smash{{\SetFigFont{12}{14.4}{\familydefault}{\mddefault}{\updefault}{\color[rgb]{0,0,0}$v_2$}%
}}}}
\put(7393,-1162){\makebox(0,0)[lb]{\smash{{\SetFigFont{12}{14.4}{\familydefault}{\mddefault}{\updefault}{\color[rgb]{0,0,0}$v_3$}%
}}}}
\put(7792,-760){\makebox(0,0)[lb]{\smash{{\SetFigFont{12}{14.4}{\familydefault}{\mddefault}{\updefault}{\color[rgb]{0,0,0}$e_2$}%
}}}}
\put(6394,-427){\makebox(0,0)[lb]{\smash{{\SetFigFont{12}{14.4}{\familydefault}{\mddefault}{\updefault}{\color[rgb]{0,0,0}$v_1$}%
}}}}
\put(7393,-229){\makebox(0,0)[lb]{\smash{{\SetFigFont{12}{14.4}{\familydefault}{\mddefault}{\updefault}{\color[rgb]{0,0,0}$e_1$}%
}}}}
\put(7258, 38){\makebox(0,0)[lb]{\smash{{\SetFigFont{12}{14.4}{\familydefault}{\mddefault}{\updefault}{\color[rgb]{0,0,0}$e_0$}%
}}}}
\put(7993, 38){\makebox(0,0)[lb]{\smash{{\SetFigFont{12}{14.4}{\familydefault}{\mddefault}{\updefault}{\color[rgb]{0,0,0}$v_0$}%
}}}}
\put(7993,-1027){\makebox(0,0)[lb]{\smash{{\SetFigFont{12}{14.4}{\familydefault}{\mddefault}{\updefault}{\color[rgb]{0,0,0}$c_1(e_2) = \alpha_1$}%
}}}}
\put(3001,-170){\makebox(0,0)[lb]{\smash{{\SetFigFont{12}{14.4}{\familydefault}{\mddefault}{\updefault}{\color[rgb]{0,0,0}$\alpha_0 \in \bar{\mathcal C_0}$}%
}}}}
\put(4468,-1164){\makebox(0,0)[lb]{\smash{{\SetFigFont{12}{14.4}{\familydefault}{\mddefault}{\updefault}{\color[rgb]{0,0,0}$v_2$}%
}}}}
\put(5002,-433){\makebox(0,0)[lb]{\smash{{\SetFigFont{12}{14.4}{\familydefault}{\mddefault}{\updefault}{\color[rgb]{0,0,0}$v_1$}%
}}}}
\put(4867,-766){\makebox(0,0)[lb]{\smash{{\SetFigFont{12}{14.4}{\familydefault}{\mddefault}{\updefault}{\color[rgb]{0,0,0}$e_1$}%
}}}}
\put(4069,-1495){\makebox(0,0)[lb]{\smash{{\SetFigFont{12}{14.4}{\familydefault}{\mddefault}{\updefault}{\color[rgb]{0,0,0}$F_0$}%
}}}}
\put(3469,-433){\makebox(0,0)[lb]{\smash{{\SetFigFont{12}{14.4}{\familydefault}{\mddefault}{\updefault}{\color[rgb]{0,0,0}$v_0$}%
}}}}
\put(5068,-1030){\makebox(0,0)[lb]{\smash{{\SetFigFont{12}{14.4}{\familydefault}{\mddefault}{\updefault}{\color[rgb]{0,0,0}$c_0(e_1) = \alpha_0$}%
}}}}
\put(10108, 38){\makebox(0,0)[lb]{\smash{{\SetFigFont{12}{14.4}{\familydefault}{\mddefault}{\updefault}{\color[rgb]{0,0,0}$e_1$}%
}}}}
\put(9844,-1495){\makebox(0,0)[lb]{\smash{{\SetFigFont{12}{14.4}{\familydefault}{\mddefault}{\updefault}{\color[rgb]{0,0,0}$F_2$}%
}}}}
\put(9244,-429){\makebox(0,0)[lb]{\smash{{\SetFigFont{12}{14.4}{\familydefault}{\mddefault}{\updefault}{\color[rgb]{0,0,0}$v_2$}%
}}}}
\put(10243,-1162){\makebox(0,0)[lb]{\smash{{\SetFigFont{12}{14.4}{\familydefault}{\mddefault}{\updefault}{\color[rgb]{0,0,0}$v_4$}%
}}}}
\put(10777,-429){\makebox(0,0)[lb]{\smash{{\SetFigFont{12}{14.4}{\familydefault}{\mddefault}{\updefault}{\color[rgb]{0,0,0}$v_3$}%
}}}}
\put(10642,-762){\makebox(0,0)[lb]{\smash{{\SetFigFont{12}{14.4}{\familydefault}{\mddefault}{\updefault}{\color[rgb]{0,0,0}$e_3$}%
}}}}
\put(10642,239){\makebox(0,0)[lb]{\smash{{\SetFigFont{12}{14.4}{\familydefault}{\mddefault}{\updefault}{\color[rgb]{0,0,0}$e_0$}%
}}}}
\put(10510,503){\makebox(0,0)[lb]{\smash{{\SetFigFont{12}{14.4}{\familydefault}{\mddefault}{\updefault}{\color[rgb]{0,0,0}$v_0$}%
}}}}
\put(10843, 38){\makebox(0,0)[lb]{\smash{{\SetFigFont{12}{14.4}{\familydefault}{\mddefault}{\updefault}{\color[rgb]{0,0,0}$v_1$}%
}}}}
\put(10843,-1027){\makebox(0,0)[lb]{\smash{{\SetFigFont{12}{14.4}{\familydefault}{\mddefault}{\updefault}{\color[rgb]{0,0,0}$c_2(e_3) = \alpha_0$}%
}}}}
\end{picture}%
}%
	\caption{Construction of the first few fans $F_i$}
	\label{fig:fans}
\end{figure}

We continue to construct a sequence of fans $F_i = (e_i,c_i;v_{i+1};v_i,v_{i+2})$ for $i=0,1,2,\ldots$ in this way, maintaining the property that $\alpha_{i+2}=\alpha_i$ (see Figure \ref{fig:fans}).
This is possible because when we construct $c_{i+1}$ from $c_i$, we make $\alpha_i$ available at $v_{i+2}$, so the set $\bar\fc_{i+2}$ (the set of colours missing at $v_{i+2}$ in the colouring $c_{i+2}$) always contains $\alpha_{i}$.  We continue constructing our sequence of fans until we reach some $j$ for which $v_j \in \{v_i\}_{i=0}^{j-1}$, which will inevitably happen if we never find a fan of size 3 or greater.  We claim that $v_j=v_0$ and $j$ is odd.  To see this, consider the original edge-colouring of $G-e_0$ and note that for $1\leq i\leq j-1$, $\alpha_{0}$ appears on an edge between $v_i$ and $v_{i+1}$ precisely if $i$ is odd, and $\alpha_1$ appears on an edge between $v_i$ and $v_{i+1}$ precisely if $i$ is even.  Thus since the edges of colour $\alpha_0$ form a matching, and so do the edges of colour $\alpha_1$, we indeed have $v_j=v_0$ and $j$ odd.  Furthermore $F_0=F_j$.  Let $C$ denote the cycle $v_0,v_1,\ldots,v_{j-1}$.  In each colouring, $\alpha_0$ and $\alpha_1$ both appear $(j-1)/2$ times on $C$, in a near-perfect matching.  Let $H$ be the sub-multigraph of $G$ consisting of those edges between $v_i$ and $v_{i+1}$ for $0\leq i\leq j$ (with indices modulo $j$).  Let $A$ be the set of colours missing on at least one vertex of $C$, and let $H_A$ be the sub-multigraph of $H$ consisting of $e_0$ and those edges receiving a colour in $A$ in $c_0$ (and therefore in any $c_i$).

Suppose $j=3$. If some colour is missing on two vertices of $C$ in $c_0$, $c_1$, or $c_2$, we
can easily find a $k$-edge-colouring of $G$ since any two vertices of $C$ are the endpoints
of $e_0$, $e_1$, or $e_2$, a contradiction since $G$ is a minimum counterexample. We know that every colour in $\bar{\mathcal C_0}$ appears between $v_1$ and $v_2$, and
every colour in $\bar{\mathcal C_1}$ appears between $v_2$ and $v_0$ and every colour in $\bar{\mathcal C_2}$ appears between $v_0$ and $v_1$. Therefore $|E(H_A)| = |A| + 1$ and by (\ref{eqn:9}) we have
\begin{eqnarray*}
&&4\bar \gall(G)\\
&\geq& d_G(v_1) + d_G(v_2) + 2d_G(v_0) - \mu_G(v_0v_1) - \mu_G(v_0v_2) + t_G(v_0v_1) + t_G(v_0v_2) \\
			&=&  d_{H_A}(v_1) + d_{H_A}(v_2) + 2d_{H_A}(v_0) + 4(k-|A|) - \mu_G(v_0v_1) - \mu_G(v_0v_2) + t_G(v_0v_1) + t_G(v_0v_2) \\
			&\geq& d_{H_A}(v_1) + d_{H_A}(v_2) + 2d_{H_A}(v_0) + 4(k-|A|) - \mu_{H_A}(v_0v_1) - \mu_{H_A}(v_0v_2) + t_{H_A}(v_0v_1) + t_{H_A}(v_0v_2) \\
			&\geq& 4|E(H_A)| + 4(k-|A|) \\
			&>& 4|A| + 4(k-|A|) = 4k,
\end{eqnarray*}
a contradiction since $k = \bar \gall(G)$.  We can therefore assume that $j\geq 5$.

Let $\beta$ be a colour in $A \setminus \{\alpha_0,\alpha_1\}$.  If $\beta$ is missing at two consecutive vertices $v_i$ and $v_{i+1}$ then we can easily extend $c_i$ to a $k$-edge-colouring of $G$.  Bearing in mind that each $F_i$ is a maximal fan, we claim that if $\beta$ is not missing at two consecutive vertices then either we can easily $k$-edge-colour $G$, or the number of edges coloured $\beta$ in $H_{A}$ is at least twice the number of vertices at which $\beta$ is missing in any $c_i$.

\begin{figure}[ht]
	\centering
	\scalebox{0.75}{%
	\begin{picture}(0,0)%
\includegraphics{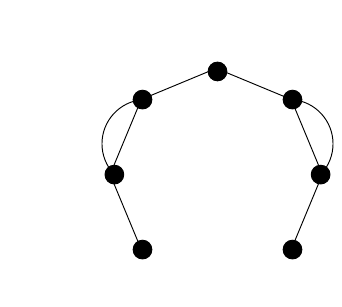}%
\end{picture}%
\setlength{\unitlength}{3947sp}%
\begingroup\makeatletter\ifx\SetFigFont\undefined%
\gdef\SetFigFont#1#2#3#4#5{%
  \reset@font\fontsize{#1}{#2pt}%
  \fontfamily{#3}\fontseries{#4}\fontshape{#5}%
  \selectfont}%
\fi\endgroup%
\begin{picture}(2730,2436)(4486,-1789)
\put(5101,-436){\makebox(0,0)[lb]{\smash{{\SetFigFont{12}{14.4}{\familydefault}{\mddefault}{\updefault}{\color[rgb]{0,0,0}$e_{\beta}'$}%
}}}}
\put(7201,-436){\makebox(0,0)[lb]{\smash{{\SetFigFont{12}{14.4}{\familydefault}{\mddefault}{\updefault}{\color[rgb]{0,0,0}$e_{\beta}$}%
}}}}
\put(6151,239){\makebox(0,0)[lb]{\smash{{\SetFigFont{12}{14.4}{\familydefault}{\mddefault}{\updefault}{\color[rgb]{0,0,0}$v_0$}%
}}}}
\put(5176,-136){\makebox(0,0)[lb]{\smash{{\SetFigFont{12}{14.4}{\familydefault}{\mddefault}{\updefault}{\color[rgb]{0,0,0}$v_{j-1}$}%
}}}}
\put(6976,-1411){\makebox(0,0)[lb]{\smash{{\SetFigFont{12}{14.4}{\familydefault}{\mddefault}{\updefault}{\color[rgb]{0,0,0}$v_3$}%
}}}}
\put(7126,-736){\makebox(0,0)[lb]{\smash{{\SetFigFont{12}{14.4}{\familydefault}{\mddefault}{\updefault}{\color[rgb]{0,0,0}$v_2$}%
}}}}
\put(6976,-136){\makebox(0,0)[lb]{\smash{{\SetFigFont{12}{14.4}{\familydefault}{\mddefault}{\updefault}{\color[rgb]{0,0,0}$v_1$}%
}}}}
\put(4951,-736){\makebox(0,0)[lb]{\smash{{\SetFigFont{12}{14.4}{\familydefault}{\mddefault}{\updefault}{\color[rgb]{0,0,0}$v_{j-2}$}%
}}}}
\put(5101,-1411){\makebox(0,0)[lb]{\smash{{\SetFigFont{12}{14.4}{\familydefault}{\mddefault}{\updefault}{\color[rgb]{0,0,0}$v_{j-3}$}%
}}}}
\put(4501,-1711){\makebox(0,0)[lb]{\smash{{\SetFigFont{12}{14.4}{\familydefault}{\mddefault}{\updefault}{\color[rgb]{0,0,0}$\beta \notin \bar{\mathcal C}_{j-3}$}%
}}}}
\put(7126,-1711){\makebox(0,0)[lb]{\smash{{\SetFigFont{12}{14.4}{\familydefault}{\mddefault}{\updefault}{\color[rgb]{0,0,0}$\beta \notin \bar{\mathcal C_3}$}%
}}}}
\put(6001,464){\makebox(0,0)[lb]{\smash{{\SetFigFont{12}{14.4}{\familydefault}{\mddefault}{\updefault}{\color[rgb]{0,0,0}$\beta \in \bar{\mathcal C_0}$}%
}}}}
\end{picture}%
}
	\caption{The graph $H_A$}
	\label{fig:ha}
\end{figure}

To prove this claim, first assume without loss of generality that $\beta\in \bar\fc_0$.  Since $\beta$ is not missing at $v_1$, $\beta$ appears on an edge, say $e_{\beta}$, between $v_1$ and $v_2$ for the same reason that $\alpha_0$ does.  Likewise, since $\beta$ is not missing at $v_{j-1}$, $\beta$ appears on an edge $e_{\beta}'$ between $v_{j-1}$ and $v_{j-2}$.  Finally, suppose $\beta$ appears between $v_1$ and $v_2$, and is missing at $v_3$ in $c_0$.  Then let $e_\beta$ be the edge between $v_1$ and $v_2$ with colour $\beta$ in $c_0$.  We construct a colouring $c'_0$ from $c_0$ by giving $e_2$ colour $\beta$ and giving $e_\beta$ colour $\alpha_1$ (i.e.\ we swap the colours of $e_\beta$ and $e_2$).  Thus $c'_0$ is a $k$-edge-colouring of $G-e_0$ in which $\beta$ is missing at both $v_0$ and $v_1$.  We can therefore extend $G-e_0$ to a $k$-edge-colouring of $G$.  Thus if $\beta$ is missing at $v_{3}$ or $v_{j-3}$ we can easily $k$-edge-colour $G$.  We therefore have at least two edges of $H_A$ coloured $\beta$ for every vertex of $C$ at which $\beta$ is missing, and we do not double-count edges (see Figure \ref{fig:ha}). This proves the claim, and the analogous claim for any colour in $A$ also holds.

\noindent Now, taking indices modulo $j$, we have 

\begin{equation}
\sum_{i=0}^{j-1}\mu_{H_A}(v_iv_{i+1}) = |E(H_A)| > 2\sum_{i=0}^{j-1}(k-d_G(v_i)).
\end{equation}
Therefore
\begin{equation}
\sum_{i=0}^{j-1}\left( 2d_G(v_i) +  \mu_{H_A}(v_{i}v_{i+1}) \right) > 2jk.
\end{equation}
Rewriting,
\begin{equation}
\sum_{i=0}^{j-1}\left( d_G(v_i) + \tfrac 12 \mu_{H_A}(v_{i+1}v_{i+2}) + d_G(v_{i+2}) + \mu_{H_A}(v_{i}v_{i+1}) \right) > 2jk.
\end{equation}

Therefore there exists some index $i$ for which 
\begin{equation}
d_G(v_i) + \tfrac 12 \mu_{H_A}(v_{i+1}v_{i+2}) + d_G(v_{i+2}) + \tfrac 12 \mu_{H_A}(v_iv_{i+1}) > 2k.
\end{equation}
Therefore by (\ref{eqn:2}),
\begin{eqnarray}
2k\geq 2\bar \gall &\geq& d_G(v_i) + \tfrac 12(d(v_{i+1} - \mu_{H_A}(v_{i}v_{i+1}) )  + d_G(v_{i+2}) + \tfrac 12(d(v_{i+1} - \mu_{H_A}(v_{i+1}v_{i+2}) ) \\
&\geq & d_G(v_i) + \tfrac 12 \mu_{G}(v_{i+1}v_{i+2}) + d_G(v_{i+2}) + \tfrac 12 \mu_{G}(v_iv_{i+1})\\
&>& 2k,
\end{eqnarray}
a contradiction.  So we can indeed find a $k$-edge-colouring of $G$. This contradicts the assertion that $G$ is a minimum counterexample and completes the proof.
\end{proof}

\section{Colouring quasi-line graphs with $\gall(G)$ colours}

In this section we extend our bound on the chromatic number to {\em quasi-line} graphs.  A graph is quasi-line if every vertex is {\em bisimplicial}, i.e.\ its neighbours can be covered by two cliques.  This class contains all circular interval graphs and all line graphs, and just like those two classes, the fractional and integer chromatic numbers agree asymptotically for quasi-line graphs \cite{kingr13}.

Quasi-line graphs are essentially constructed as a combination of line graphs and circular interval graphs.  We forgo a lengthy description of their structure and instead direct the unfamiliar reader to \cite{chudnovskykps12}, \cite{cssurvey}, and \cite{kingthesis}.  Here we present the bare minimum of what we need.

To proceed we must define {\em linear interval graphs}, which are also known as {\em proper interval graphs}\footnote{The divergence of terminology is an unfortunate consequence of two possible definitions: one in which intervals represent cliques, and one, the original, in which intervals represent vertices.} \cite{denghh96}.  A graph $G=(V,E)$ is a linear interval graph precisely if it has a {\em linear interval representation}.  A linear interval representation consists of a point on the real line for each vertex, and a set of intervals such that vertices $u$ and $v$ are adjacent in $G$ precisely if there is an interval containing both corresponding points on the real line.  If $X$ and $Y$ are specified cliques in $G$ consisting of the $|X|$ leftmost and $|Y|$ rightmost vertices (with respect to the real line) of $G$ respectively, we say that $X$ and $Y$ are {\em end-cliques} of $G$.  These cliques may be empty.  We now describe how we might isolate a linear interval graph within a quasi-line graph.

\begin{figure}[ht]
	\centering
	\scalebox{0.7}{
	\begin{picture}(0,0)%
\includegraphics{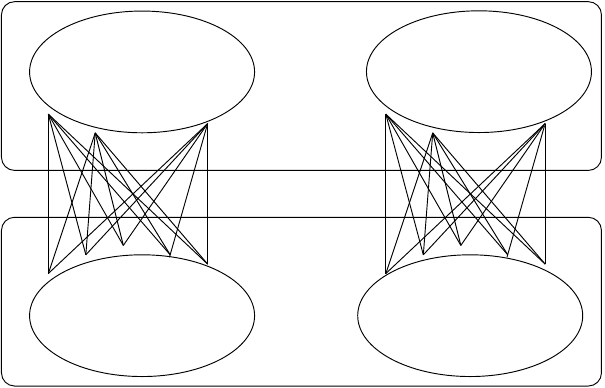}%
\end{picture}%
\setlength{\unitlength}{3947sp}%
\begingroup\makeatletter\ifx\SetFigFont\undefined%
\gdef\SetFigFont#1#2#3#4#5{%
  \reset@font\fontsize{#1}{#2pt}%
  \fontfamily{#3}\fontseries{#4}\fontshape{#5}%
  \selectfont}%
\fi\endgroup%
\begin{picture}(4824,3099)(4939,-4798)
\put(7126,-3886){\makebox(0,0)[lb]{\smash{{\SetFigFont{14}{16.8}{\familydefault}{\mddefault}{\updefault}{\color[rgb]{0,0,0}$G_1$}%
}}}}
\put(6001,-2986){\makebox(0,0)[lb]{\smash{{\SetFigFont{12}{14.4}{\familydefault}{\mddefault}{\updefault}{\color[rgb]{0,0,0}...}%
}}}}
\put(8701,-2986){\makebox(0,0)[lb]{\smash{{\SetFigFont{12}{14.4}{\familydefault}{\mddefault}{\updefault}{\color[rgb]{0,0,0}...}%
}}}}
\put(5776,-4336){\makebox(0,0)[lb]{\smash{{\SetFigFont{14}{16.8}{\familydefault}{\mddefault}{\updefault}{\color[rgb]{0,0,0}$X_1$}%
}}}}
\put(5776,-2311){\makebox(0,0)[lb]{\smash{{\SetFigFont{14}{16.8}{\familydefault}{\mddefault}{\updefault}{\color[rgb]{0,0,0}$X_2$}%
}}}}
\put(8551,-4336){\makebox(0,0)[lb]{\smash{{\SetFigFont{14}{16.8}{\familydefault}{\mddefault}{\updefault}{\color[rgb]{0,0,0}$Y_1$}%
}}}}
\put(8551,-2311){\makebox(0,0)[lb]{\smash{{\SetFigFont{14}{16.8}{\familydefault}{\mddefault}{\updefault}{\color[rgb]{0,0,0}$Y_2$}%
}}}}
\put(7126,-2086){\makebox(0,0)[lb]{\smash{{\SetFigFont{14}{16.8}{\familydefault}{\mddefault}{\updefault}{\color[rgb]{0,0,0}$G_2$}%
}}}}
\end{picture}%
}
	\caption{A canonical interval 2-join}
	\label{fig:ci2j}
\end{figure}

Given four cliques $X_1$, $Y_1$, $X_2$, and $Y_2$, we say that $((V_1,X_1,Y_1),\allowbreak (V_2,X_2,Y_2))$ is a {\em canonical interval 2-join} if it satisfies the following conditions (see Figure \ref{fig:ci2j}):
\begin{itemize}
\item $V(G)$ can be partitioned into nonempty $V_1$ and $V_2$ with $X_1\cup Y_1\subseteq V_1$ and $X_2\cup Y_2\subseteq V_2$ such that for $v_1\in V_1$ and $v_2\in V_2$, $v_1v_2$ is an edge precisely if $\{v_1,v_2\}$ is in $X_1\cup X_2$ or $Y_1\cup Y_2$.
\item $G|V_2$ is a linear interval graph with disjoint end-cliques $X_2$ and $Y_2$.
\end{itemize}

We now say that a quasi-line graph $G$ is a minimum counterexample if $\chi(G) > \gall(G)$, but no smaller quasi-line graph has the same property.  Note that any induced subgraph of a quasi-line graph is quasi-line.  Theorems \ref{thm:cig} and \ref{thm:line}, combined with well-known structural results (e.g.\ Theorem 16 and the discussion in Sections 3.3-3.5 of \cite{chudnovskykps12}), imply:

\begin{proposition}\label{lem:mceql}
If $G$ is a minimum counterexample then $G$ admits a canonical interval 2-join.
\end{proposition}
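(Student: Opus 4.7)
The plan is to invoke the Chudnovsky--Seymour structure theorem for quasi-line graphs, which (in the form catalogued in Section~3 of \cite{chudnovskykps11}) states that every connected quasi-line graph is either a circular interval graph, the line graph of a multigraph, or admits a canonical interval 2-join. Given this theorem, the proposition reduces to ruling out the first two alternatives for a minimum counterexample $G$, using Theorems \ref{thm:cig} and \ref{thm:line} respectively.

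First I would dispose of disconnected graphs. Being quasi-line is hereditary, so each component of $G$ is a strictly smaller quasi-line graph and hence satisfies $\chi \le \gall$ by minimality of the counterexample. Since $\chi(G) = \max_i \chi(C_i)$ over components $C_i$ and $\gall(G) \ge \gall(C_i)$ for each component (both $\gall$ and $\gall'$ are maxima over edges, and every edge of a component is an edge of $G$), this yields $\chi(G) \le \gall(G)$, a contradiction. So $G$ is connected.

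Now I apply the structure theorem. If the first outcome holds, $G$ is a circular interval graph and Theorem \ref{thm:cig} gives $\chi(G) \le \gall(G)$, contradicting the choice of $G$. If the second outcome holds, $G = L(H)$ for some multigraph $H$; as explained in the paragraph preceding Theorem \ref{thm:line}, the edge-colouring bound $\chi'(H) \le \bar\gall(H)$ established there translates directly into the vertex-colouring bound $\chi(G) \le \gall(G)$, again a contradiction. The only remaining outcome of the structure theorem is that $G$ admits a canonical interval 2-join, which is what we wanted.

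The main obstacle is not mathematical but bibliographic: one must verify that the structure theorem as stated in \cite{chudnovskykps11} does produce a decomposition matching the precise definition of canonical interval 2-join used here (in particular that $G|V_2$ is a genuine linear interval graph with the designated cliques as disjoint end-cliques, rather than a ``fuzzy'' variant requiring further unpacking). Once the statement is aligned with our definitions, the argument is a one-line case analysis against Theorems \ref{thm:cig} and \ref{thm:line}.
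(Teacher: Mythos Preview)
Your proposal is correct and matches the paper's own justification, which likewise derives the proposition directly from Theorems~\ref{thm:cig} and~\ref{thm:line} together with the structural results in Section~3 of \cite{chudnovskykps11}. In fact you supply more detail than the paper does (handling connectedness explicitly and flagging the need to align the cited decomposition with the definition of canonical interval 2-join), but the underlying argument is the same case analysis.
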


The main result of this section is:

\begin{theorem}\label{thm:ql}
Let $G$ be a quasi-line graph. Then $\chi(G) \leq \gall(G)$.
\end{theorem}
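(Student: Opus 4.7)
The plan is to assume for contradiction that $G$ is a minimum quasi-line counterexample to Theorem~\ref{thm:ql} and derive a contradiction from the canonical interval 2-join guaranteed by Proposition~\ref{lem:mceql}. Since $G$ is a minimum counterexample, Theorem~\ref{thm:cig} forbids $G$ from being a circular interval graph and Theorem~\ref{thm:line} forbids $G$ from being a line graph, so Proposition~\ref{lem:mceql} applies and $G$ admits a canonical interval 2-join $((V_1,X_1,Y_1),(V_2,X_2,Y_2))$. Among all such 2-joins in $G$, I would choose one minimizing $|V_2|$.

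The next step is to reduce the linear interval side. Using the linear interval representation of $G|V_2$ together with the minimality of $|V_2|$, one can excise an interior vertex of $V_2$ (or, more efficiently, replace the whole of $G|V_2$ by a minimal canonical linear interval gadget spanning the end-cliques $X_2$ and $Y_2$), obtaining a smaller quasi-line graph $G'$ that still admits a canonical interval 2-join with the same attachment into $V_1$. The aim is for $G'$ to have strictly fewer vertices than $G$ and to satisfy $\gall(G') \leq \gall(G)$; this mirrors the reduction used in the proof of the Local Reed's Conjecture for quasi-line graphs in \cite{chudnovskykps11} and \cite{kingthesis}.

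By the minimality of $G$, the smaller quasi-line graph $G'$ satisfies $\chi(G') \leq \gall(G') \leq \gall(G)$ and hence admits a $\gall(G)$-colouring. It then remains to lift this colouring back to $G$ by colouring the removed interior vertices of $V_2$. Because the interior of $G|V_2$ is a linear interval graph attached to already-coloured end-cliques $X_2, Y_2$, this extension reduces to a Hall/matching-style argument inside $G|V_2$, again paralleling the treatment in \cite{chudnovskykps11} and \cite{kingthesis}. This contradicts the assumption that $G$ is a counterexample and completes the proof.

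The main obstacle is the \emph{superlocal} nature of $\gall$: unlike $\gamma_\ell$, which depends on a single vertex's closed neighbourhood, $\gall'(uv)$ depends on two adjacent vertices, so a reduction performed inside $V_2$ can in principle increase $\gall'(uv)$ for an edge $uv$ crossing, or near, the 2-join. Verifying that the reduction preserves $\gall(G') \leq \gall(G)$ therefore requires a careful edge-by-edge audit of how $d(\cdot)$ and $\omega(\cdot)$ change at vertices in $X_2 \cup Y_2$ and at their neighbours in $V_1$, checking separately those edges internal to $V_1$, internal to the reduced interval part, and crossing the 2-join. Once that audit goes through, Theorem~\ref{thm:cig}, Theorem~\ref{thm:line}, and Theorem~\ref{thm:frac3} (replacing their local analogues) allow the existing quasi-line argument to be carried over essentially verbatim.
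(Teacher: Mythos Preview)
Your proposal diverges from the paper's actual argument and, as written, does not close. The paper does \emph{not} reduce the interval side $V_2$; instead it colours $G_1$ directly (legitimate since $|V_1|<|V(G)|$ and $\gall(G_1)\le\gall(G)$) and then proves Lemma~\ref{lem:quasilinemce2}: any proper $l$-colouring of $G_1$ with $l\ge\gall^j(H_2)$ extends to an $l$-colouring of $G$. The proof of that lemma is a six-case analysis on the number $k$ of colour classes meeting both $X_1$ and $Y_1$; in each case one either extends the colouring outright (by colouring $G_2$ greedily, or by a ``roll-back'' trick on a left-to-right colouring of a linear interval graph) or removes a stable set consisting of a colour class of $G_1$ together with a greedy stable set in $G_2$, chosen so that $\gall^j(uv)$ drops for every edge of $H_2$, and recurses on $l$. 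This is precisely the scheme of \cite{chudnovskykps11}; there is no gadget replacement or excision of interior vertices of $V_2$ in either reference, so your appeal to those papers as a template for your reduction is misplaced.

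Your plan has two concrete gaps. First, the ``reduction'' step is never actually carried out: you neither specify the gadget nor prove $\gall(G')\le\gall(G)$, and you yourself identify this edge-by-edge audit as the main obstacle and leave it open. Second, and more seriously, lifting a colouring of $G'$ back to $G$ is not in general a Hall/matching argument. If you excise an interior vertex $v\in V_2$, reinserting it requires a colour missing at $v$, i.e.\ $d_G(v)<\gall(G)$, which need not hold; and if you replace $G|V_2$ by a smaller strip, the colouring of that strip carries no information about how to colour the original one relative to the fixed colours on $X_1\cup Y_1$. The substantive content of the paper's proof is exactly this extension from $G_1$ to $G$, handled by the case analysis of Lemma~\ref{lem:quasilinemce2}; your outline bypasses it without a substitute.
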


\noindent{\bf Remark:} As with Theorem \ref{thm:line}, here we implicitly prove that the algorithm from \cite{chudnovskykps12} uses at most $\gall(G)$ colours.  This gives us a time complexity bound of $O(n^2m^2)$, which we believe can be improved to $O(m^2)$.\\

To prove Theorem \ref{thm:ql} it only remains to prove that a minimum counterexample cannot contain a canonical interval 2-join.  Given a canonical interval 2-join $((V_1,X_1, Y_1),\allowbreak (V_2,X_2, Y_2))$ in $G$ with an appropriate partitioning $V_1$ and $V_2$, let $G_1$ denote $G|V_1$, let $G_2$ denote $G|V_2$ and let $H_2$ denote $G|(V_2 \cup X_1 \cup Y_1)$.  For $v \in H_2$ we define $\omega'(v)$ as the size of the largest clique in $H_2$ containing $v$ and not intersecting both $X_1 \setminus Y_1$ and $Y_1 \setminus X_1$.  For $uv\in E(H_2)$ we define $\gall^j(uv)$ as $\lceil \tfrac 14 (d_G(u) + d_G(v)+2+\omega'(u) +\omega'(v))\rceil$, and we define $\gall^j(H_2)$ as $\max_{uv\in E(H_2)} \gall^j(uv)$ (here the superscript $j$ denotes {\em join}).  Observe that $\gall^j(H_2) \leq \gall(G)$.  If $v \in X_1 \cup Y_1$, then $\omega'(v)$ is $|X_1|+|X_2|$, $|Y_1|+|Y_2|$, or $|X_1\cap Y_1|+ \omega(G|(X_2 \cup Y_2))$.  So rather than bounding $\chi$ by $\gall$, we bound $\chi$ by a refinement of $\gall$ derived from our decomposition.

\begin{lemma}\label{lem:quasilinemce2}
Let $G$ be a minimum counterexample admitting a canonical interval 2-join $((V_1,X_1, Y_1),\allowbreak (V_2,X_2, Y_2))$.  Then given a proper $l$-colouring of $G_1$ for any $l \geq \gall^j(H_2)$, we can find a proper $l$-colouring of $G$.
\end{lemma}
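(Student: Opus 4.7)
The plan is to follow the structure of the analogous extension lemma for the local Reed's Conjecture in \cite{chudnovskykps11}, adapting the argument for the superlocal bound. Broadly, the strategy is to reduce the extension problem to colouring $H_2$ itself with $l$ colours subject to the pre-colouring on $X_1 \cup Y_1$, and to exploit the near-linear-interval structure of $H_2$ throughout. The graph $H_2$ is obtained by extending the linear interval representation of $G|V_2$ with $X_1$ appended on the left and $Y_1$ on the right; it is a linear interval graph when $X_1 \cap Y_1 = \emptyset$, and a circular interval graph otherwise.

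The first step would be to establish $\chi(H_2) \leq l$. For this I would adapt the proof of Theorem \ref{thm:frac3} by replacing $\omega(v)$ with $\omega'(v)$ for $v \in X_1 \cup Y_1$ in the application of Lemma \ref{lem:expectation}. The adaptation is valid because cliques of $H_2$ that meet both $X_1 \setminus Y_1$ and $Y_1 \setminus X_1$ are exactly those straddling the canonical 2-join, and these are controlled by the join structure separately rather than by a single $\gall$ term. This yields $\chi_f(H_2) \leq \gall^j(H_2)$, and then the Niessen--Kind round-up property applied to $H_2$ gives $\chi(H_2) \leq \lceil \chi_f(H_2) \rceil \leq \gall^j(H_2) \leq l$.

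The second step, which is the main obstacle, is the pre-colouring extension: we need an $l$-colouring of $H_2$ that \emph{agrees} with the given colouring on $X_1 \cup Y_1$, not merely some $l$-colouring of $H_2$. I would sweep through the vertices of $V_2 \setminus (X_2 \cup Y_2)$ from left to right along the linear representation, at each step choosing a colour not appearing on any currently active interval, while respecting the pre-assigned colours at the left end through vertices of $X_2$. The invariant that enough colours remain available throughout follows from $\omega(H_2) \leq l$. The delicate case is $X_1 \cap Y_1 \neq \emptyset$, together with ensuring compatibility with the pre-coloured $Y_1$ when the sweep reaches the right end; if the greedy assignment fails there, I would apply a Vizing-style colour swap along a suitable two-coloured path in the interval representation --- in the spirit of the fan manipulations used in the proof of Theorem \ref{thm:line} --- to rotate the colouring into compatibility with the colours pre-assigned on $Y_1$. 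The slack $l \geq \gall^j(H_2)$ guarantees that the required swap exists.
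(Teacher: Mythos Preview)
Your proposal has two substantial gaps.

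First, the claim that $H_2$ is a linear (or circular) interval graph is not justified and is in general false. The definition of a canonical interval $2$-join constrains only the edges between $V_1$ and $V_2$; it says nothing about edges inside $V_1$ between $X_1\setminus Y_1$ and $Y_1\setminus X_1$. Such edges survive in $H_2 = G|(V_2 \cup X_1 \cup Y_1)$ and destroy any interval representation, so the Niessen--Kind round-up property cannot be invoked for $H_2$. This is precisely why the paper introduces $\omega'(v)$, deliberately excluding cliques that straddle both sides: the genuine $\omega_{H_2}(v)$ is not controlled by $\gall^j$, and your proposed replacement of $\omega$ by $\omega'$ in Lemma~\ref{lem:expectation} is therefore not an innocuous bookkeeping change but an actual weakening of the hypothesis, which the lemma does not survive. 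The paper never colours $H_2$ directly; it works instead with $G_2$ or with $H_2 - Y_1$, both of which are honest linear interval graphs.

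Second, even granting $\chi(H_2)\le l$, your Step~2 is where the entire content of the lemma lies, and ``sweep left to right, then apply a Vizing-style swap at the right end'' is not a proof. The difficulty is not just matching colours at $Y_1$; it is that the pre-colouring on $X_1\cup Y_1$ is fixed by $G_1$, and the number $k$ of colours common to $X_1$ and $Y_1$ governs how much slack remains. The paper's argument proceeds by induction on $l$: it first maximizes $k$ by recolouring within $G_1$, then splits into six cases according to $k$ and $|X_1|,|Y_1|$. In the generic cases one removes a colour class of $G_1$ together with a greedily chosen stable set in $G_2$, verifies that $\gall^j(uv)$ drops for every edge $uv$ of $H_2$, and applies induction. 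The boundary cases ($k=0$ with $l=|X_1|+|Y_1|$, and $k=|X_1|=|Y_1|$) require explicit constructions, including a ``roll-back'' trick on a modular greedy colouring of $H_2-Y_1$. None of this structure is captured by a single sweep-and-swap; in particular the inequality $l\ge \gall^j(H_2)$ is used edge by edge to certify that removing the chosen stable set lowers the bound, not merely to guarantee existence of one recolouring path.
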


Since $\chi(G_1) \leq \gall(G_1)\leq \gall(G)$ and $\gall^j(H_2) \leq \gall(G)$, this lemma immediately implies Theorem \ref{thm:ql}.
Also since a minimum counterexample cannot contain a clique cutset (this is a straightforward observation since no graph with a clique cutset is vertex-critical), all four cliques $X_1$, $Y_1$, $X_2$, and $Y_2$ must be nonempty.

\begin{proof}
We proceed by induction on $l$, observing that the case $l = 1$ is trivial.  We begin by modifying the colouring so that the number $k$ of colours used in both $X_1$ and $Y_1$ in the $l$-colouring of $G_1$ is maximal.  That is, if a vertex $v \in X_1$ gets a colour that does not appear in $Y_1$, then every colour appearing in $Y_1$ appears in $N(v)$.  If $l$ exceeds $\gall^j(H_2)$ we can just remove a colour class in $G_1$ and apply induction on what remains.  Thus we can assume that $l =\gall^j(H_2)$ and so if we apply induction we must remove a stable set whose removal lowers both $l$ and $\gall^j(H_2)$.

We use case analysis; when considering a case we may assume no previous case applies.  In some cases we extend the colouring of $G_1$ to an $l$-colouring of $G$ in one step.  In other cases we remove a colour class in $G_1$ together with vertices in $G_2$ such that everything we remove is a stable set, and when we remove it we reduce $\gall^j(uv)$ for every $uv \in E(H_2)$; after doing this we apply induction on $l$.  Notice that if $X_1 \cap Y_1 \neq \emptyset$ and there are edges between $X_2$ and $Y_2$ we may have a large clique in $H_2$ which contains some but not all of $X_1$ and some but not all of $Y_1$; this is a subtlety that we deal with in every applicable case.

\begin{itemize}
\item[Case 1.]$Y_1 \subseteq X_1$.

Since $G$ cannot contain a clique cutset, $H_2 = G$ and furthermore $H_2$ is a circular interval graph, contradicting the assumption that $G$ is a minimum counterexample.

\item[Case 2.]$k=0$ and $l > |X_1|+|Y_1|$.

Here $X_1$ and $Y_1$ are disjoint since $k=0$.  Take a stable set $S$ greedily from left to right in $G_2$.  By this we mean that we start with $S=\{v_1\}$ (the leftmost vertex of $X_2$) and we move along the vertices of $G_2$ in linear order, adding a vertex to $S$ whenever doing so will leave $S$ a stable set.  So $S$ hits $X_2$.  If it hits $Y_2$, remove $S$ along with a colour class in $G_1$ not intersecting $X_1\cup Y_1$; these vertices together make a stable set.  If $v\in G_2$ it is easy to see that $\omega'(v)$ will drop: $S$ intersects every maximal clique containing $v$.  If $v\in X_1 \cup Y_1$ then since $X_1$ and $Y_1$ are disjoint, $\omega'(v)$ is either $|X_1|+|X_2|$ or $|Y_1|+|Y_2|$; in either case $\omega'(v)$ drops.  Therefore since $S$ is maximal in $H_2$, $\gall^j(uv)$ drops for each edge $uv\in E(H_2)$ when $S$ and the colour class are removed.  Therefore $\gall^j(H_2)$ and $l$ drop, and we can proceed by induction.

If $S$ does not hit $Y_2$ we remove $S$ along with a colour class from $G_1$ that hits $Y_1$ (and therefore not $X_1$).  Since $S\cap Y_2 = \emptyset$ the vertices together make a stable set.  Using the same argument as before we can see that removing these vertices drops both $l$ and $\gall^j(H_2)$, so we can proceed by induction.

\item[Case 3.]$k=0$ and $l = |X_1|+|Y_1|$.

Again, $X_1$ and $Y_1$ are disjoint.  Since $G$ cannot contain a clique cutset, $G_2$ is connected.  
Therefore every vertex in $X_2$ must have a neighbour outside $X_1$.  
Consequently $\gall^j(H_2) > |X_1\cup X_2| \geq |X_1|+1$.  Since $l \geq \gall^j(H_2)$, this implies that $|Y_1|>1$.  
The symmetric argument tells us that $|X_1|>1$.

By maximality of $k$, every vertex in $X_1 \cup Y_1$ has at least $l-1$ neighbours in $G_1$. Since $l=|X_1|+|Y_1|$ and $\gall^j(X_1\cup Y_1) \leq l$, we know that $\omega'(X_1) \leq |X_1|+|Y_1|-|X_2|$ and $\omega'(Y_1) \leq |X_1|+|Y_1|-|Y_2|$.  Thus $|Y_1| \geq 2|X_2|$ and similarly $|X_1| \geq 2|Y_2|$.  For the remainder of this case we assume without loss of generality that $|Y_2| \leq |X_2|$.

We first attempt to $l$-colour $H_2 - Y_1$, which we denote by $H_3$,
such that every colour in $Y_2$ appears in $X_1$ -- this is clearly
sufficient to prove the lemma since we can permute the colour classes
and paste this colouring onto the colouring of $G_1$ to get a proper
$l$-colouring of $G$.  If $\omega(H_3) \leq l-|Y_2|$ then this is
easy:  since $H_3$ is a linear interval graph we can $\omega(H_3)$-colour 
the vertices of $H_3$, then use
$|Y_2|$ new colours to recolour $Y_2$ and $|Y_2|$ vertices of $X_1$.
This is possible since $Y_2$ and $X_1$ have no edges between them.
Defining $b$ as $l - \omega(H_3)$, we can now assume that $b < |Y_2|$.

It now suffices to find an $\omega(H_3)$-colouring of $H_3$ such that
at most $b$ colours appear in $Y_2$ but not $X_1$. 
This is because if we take such a colouring and permute the colours so that they agree with
our $l$-colouring of $G_1$ on $X_1$, we can use the colours which don't yet appear on $H_3$
to recolour $b$ vertices in $Y_2$ to obtain a proper colouring.
There is some
clique $C = \{v_i, \ldots, v_{i+\omega(H_3)-1}\}$ in $H_3$; this
clique does not intersect $X_1$ because $|X_1 \cup X_2| \leq l-|X_2| \leq
l - |Y_2|< l-b = \omega(H_3)$, where the first inequality follows from $l = |X_1| + |Y_1|$ and $|Y_1| \geq 2|X_2|$. 
Since $\gall^j(v_iv_{i+1}) \leq l$, it is clear that
either $v_i$ or $v_{i+1}$ has at most $2b$ neighbours outside $C$. Let
$v_{i'}\in \{v_i,v_{i+1}\}$ be the vertex with this property. Since $b
< |Y_2| \leq \frac 12 |X_1|$ we can be assured that $v_{i'} \notin
X_2$.  
Since $\omega(H_3)\geq |X_1| + |X_2| > |Y_2|$, we deduce $v_{i'} \notin Y_2$.

We now colour $H_3$ greedily from left to right, modulo $\omega(H_3)$.  
If at most $b$ colours appear in $Y_2$ but not $X_1$ then we are done, otherwise we will ``roll back'' the colouring, starting at $v_{i'}$.  That is, for every $p \geq i'$, we modify the colouring of $H_3$ by giving $v_p$ the colour after the one that it currently has, modulo $\omega(H_3)$.  Since $v_{i'}$ has at most $2b$ neighbours behind it, we can roll back the colouring at least $\omega(H_3)-2b-1$ times for a total of $\omega(H_3)-2b$ proper colourings of $H_3$.

Since $v_{i'} \notin Y_2$ the colours on $Y_2$ will appear in order modulo $\omega(H_3)$ in all of the rolled back colourings.  
The colours on $X_1$ will also be in order.
Thus the colouring of $Y_2$ is one of at most $\omega(H_3)$ possibilities in each rolled back colouring, 
and in $2b+1$ of them there are at most $b$ colours appearing in $Y_2$ but not $X_1$.  
It follows that one of the rolled back colourings of $H_3$ will be acceptable.

Henceforth we drop the assumption that $|X_2|\geq |Y_2|$, and assume without loss of generality that $|X_1| \geq |Y_1|$.

\item[Case 4.]$0 < k < |X_1|$.

Take a stable set $S$ in $G_2 - X_2$ greedily from left to right.  If $S$ hits $Y_2$, we remove $S$ from $G$, along with a colour class from $G_1$ intersecting $X_1$ but not $Y_1$.  Otherwise, we remove $S$ along with a colour class from $G_1$ intersecting both $X_1$ and $Y_1$.  In either case it is a simple matter to confirm that $\gall^j(uv)$ drops for every $uv \in E(H_2)$ as we did in Case 2.  We proceed by induction.

\item[Case 5.]$k=|Y_1|=|X_1|=1$.

In this case $|X_1|=k=1$.  If $G_2$ is not connected then $X_1$ and $Y_1$ are both clique cutsets and we can proceed as in Case 1.  If $G_2$ is connected and contains an $l$-clique, then there is some $v \in V_2$ of degree at least $l$ in the $l$-clique.  Thus $\gall^j(H_2) > l$, contradicting our assumption that $l \geq \gall^j(H_2)$.  So $\omega(G_2)<l$.  We can $\omega(G_2)$-colour $G_2$ in linear time using only colours not appearing in $X_1 \cup Y_1$, thus extending the $l$-colouring of $G_1$ to a proper $l$-colouring of $G$.

\item[Case 6.]$k=|Y_1|=|X_1|> 1$.

Suppose that $k$ is not minimal.  That is, suppose there is a vertex $v \in X_1 \cup Y_1$ whose closed neighbourhood does not contain all $l$ colours in the colouring of $G_1$.  Then we can change the colour of $v$ and apply the argument of Case 4.  So assume $k$ is minimal.

Therefore every vertex in $X_1$ has degree at least $l+|X_2|-1$.  Since $X_1\cup X_2$ is a clique and $X_1$ contains an edge, $l \geq \gall^j(H_2) \geq \frac 12 (l+|X_2|+|X_1|+|X_2|)$, so $2|X_2|\leq l-k$.  Similarly, $2|Y_2|\leq l-k$, so $|X_2|+|Y_2|\leq l-k$.   Since $X_1$ and $Y_1$ contain the same $k$ colours, there are $l-k$ colours not appearing in $X_1\cup Y_1$ in the $l$-colouring of $G_1$, so we can $\omega(G_2)$-colour $G_2$, then permute the colour classes so that no colour appears in both $X_1\cup Y_1$ and $X_2 \cup Y_2$.  Thus we can extend the $l$-colouring of $G_1$ to an $l$-colouring of $G$.
\end{itemize}

These cases cover every possibility, so the lemma is proved.
\end{proof}
This completes the proof of Theorem \ref{thm:ql}.

\section{Conclusion}\label{sec:conc}

The local version of Reed's Conjecture proposes that Reed's two requirements for high chromatic number, namely high degree and high clique number, must occur in the same part of the graph.  The superlocal version proposes that this must occur at least twice in the same part of the graph.  We believe that this requirement can be pushed further, at least in the fractional setting.  Let $\mathcal C(G)$ be the set of maximal cliques in a graph $G$.

\begin{conjecture}\label{con:clique}
Every graph $G$ satisfies
$$\chi_f(G) \leq \max_{C\in \mathcal C(G)}\frac{1}{|C|}\sum_{v\in C}  \gamma_\ell'(v).$$
\end{conjecture}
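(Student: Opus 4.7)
The plan is to adapt the iterative fractional-colouring algorithm of Theorems~\ref{thm:frac1}--\ref{thm:frac3}: run Reed's uniform-random-maximum-stable-set algorithm with target total weight $T^\star := \max_{C \in \mathcal{C}(G)}\frac{1}{|C|}\sum_{v \in C}\gamma_\ell'(v)$ in place of $\gall'(G)$, and argue by contradiction that every vertex ends with $\textit{wo}_v = 1$. Suppose instead that at termination some $v_0$ has $\textit{wo}_{v_0} < 1$, and fix a maximal clique $C \in \mathcal{C}(G)$ containing $v_0$.

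The key technical step is a clique generalisation of Lemmas~\ref{lem:exp1} and \ref{lem:expectation}. For $w \in N(C)$, write $\mu(w) := |N(w) \cap C|$. Then
\[
\sum_{v \in C}|S \cap N(v)| \;=\; (|C|-1)|S \cap C| \;+\; |S \cap N(C)| \;+\; \sum_{w \in S \cap N(C)}(\mu(w)-1),
\]
and applying Lemma~\ref{lem:exp1} to each $v \in C$ and rearranging yields
\[
\E\bigl[|S \cap N(C)|\bigr] \;\geq\; 2|C| \;-\; \sum_{v \in C}(\omega(v)+|C|)\,\p(v \in S) \;-\; \sum_{w \in N(C)}(\mu(w)-1)\,\p(w \in S),
\]
which recovers Lemma~\ref{lem:expectation} when $|C|=2$.

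Consider first the easy case where every $v \in C$ has $\textit{wo}_v<1$, so that $C \subseteq V(G_i)$ for every iteration $i$. Multiplying the generalised bound by $\textit{val}_i$, summing over $i$, and collecting the $\textit{wo}_w$ terms yields
\[
\sum_{w \in N(C)} \mu(w)\,\textit{wo}_w \;\geq\; 2|C|T \;-\; \sum_{v \in C}(\omega(v)+|C|)\,\textit{wo}_v.
\]
The left-hand side is at most $\sum_{w \in N(C)}\mu(w) = \sum_{v \in C}d(v) - |C|(|C|-1)$ since $\textit{wo}_w \leq 1$. Using $2|C|T = 2|C|T^\star \geq \sum_{v \in C}(d(v)+\omega(v)+1)$ (by choice of $T^\star$ and maximality of $C$), together with the strict inequality $\textit{wo}_{v_0}<1$, the right-hand side strictly exceeds $\sum_{v \in C}d(v) - |C|(|C|-1)$, producing the desired contradiction.

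The main obstacle is the analogue of Case~2 in the proof of Theorem~\ref{thm:frac3}: if some $u \in C$ has $\textit{wo}_u = 1$ then $u$ leaves $G_i$ at iteration $j_u$ and the clique bound cannot be summed past $j_u$. One sums only up to $j_u$ and re-applies the generalised bound to the smaller clique $C \setminus \{u\}$ over later iterations, compensating for the deficit $T - T_u$ via the fact that the remaining vertices of $C^\star := C \setminus \{u\}$ accumulate weight faster once $u$ disappears; the arithmetic identity $4 \leq \omega(v)+2$ that powers the superlocal Case~2 is replaced by $2|C| \leq \sum_{v \in C^\star}(\omega(v)+|C|)$, which holds as long as each $v \in C^\star$ still has a neighbour, and which becomes automatic for $|C| \geq 3$. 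Extending this compensation to the full situation, where potentially many vertices of $C$ finish at distinct iterations, is the central technical hurdle; it may be convenient to first choose $C^\star$ as a maximal clique of the unfinished subgraph $G\,|\,\{v : \textit{wo}_v < 1\}$ through $v_0$ and then extend it to a maximal clique of $G$, so that $|C \setminus C^\star|$ is kept as small as possible.
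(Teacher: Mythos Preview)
This statement is Conjecture~\ref{con:clique} in the paper: it is posed as an \emph{open problem} in the concluding section and the paper offers no proof. So there is nothing to compare your attempt against; the authors themselves, who invented the technique you are adapting, did not see how to push it this far.

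Your clique generalisation of Lemma~\ref{lem:expectation} is correct, and your ``easy case'' (where an entire maximal clique $C$ survives to the end of the procedure) is handled correctly: the arithmetic closes exactly as you indicate, using $2|C|T^\star \geq \sum_{v\in C}(d(v)+\omega(v)+1)$ and $\sum_{w\in N(C)}\mu(w) = \sum_{v\in C}d(v) - |C|(|C|-1)$.

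The gap is precisely where you say it is, and it is a real one. In the paper's Case~2 for $|C|=2$, the argument hinges on the fact that once the last neighbour $u$ of $v$ disappears, $v$ is \emph{isolated} in $G_i$, so $a_i(v)=\textit{val}_i$ exactly; this is what permits the swap $4\sum_{i>j}\textit{val}_i = 4\sum_{i>j}a_i(v) \leq (\omega(v)+2)\sum_{i>j}a_i(v)$. For $|C|\geq 3$ this mechanism is unavailable: the surviving sub-clique $C^\star$ always has $|C^\star|\geq 1$ internal edges, so no vertex of $C^\star$ is ever isolated and you never get $a_i(v)=\textit{val}_i$. Your proposed replacement inequality $2|C|\leq \sum_{v\in C^\star}(\omega(v)+|C|)$ is true, but to use it you would need to bound $2|C|\sum_{i>j}\textit{val}_i$ above by $\sum_{v\in C^\star}(\omega(v)+|C|)\sum_{i>j}a_i(v)$, and the inequality $a_i(v)\leq \textit{val}_i$ points the wrong way. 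Iterating the clique bound on the shrinking $C^\star$ produces mismatched coefficients ($2|C|$ versus $2|C^\star|$, and $\omega(v)+|C|$ versus $\omega(v)+|C^\star|$) together with a changing boundary $N(C^\star)$ and changing multiplicities $\mu$, and it is not clear these telescoping errors can be controlled. Your final suggestion of choosing $C^\star$ maximal in the unfinished subgraph does not help either: maximality there only says no unfinished vertex sees all of $C^\star$, not that the vertices of $C^\star$ have no unfinished neighbours. In short, the proposal is a reasonable plan of attack, but the hard case is not a matter of bookkeeping; it is the substance of the conjecture.
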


We cannot hope to take the maximum average over a closed neighbourhood rather than the maximum average over a maximal clique.  To see this, take a clique $C$ of size $k$ and attach $k$ pendant vertices to every vertex of $C$.  Each $v$ in $C$ has $\gamma_\ell'(v) = \frac 32 k$, and each $u \notin C$ has $\gamma_\ell'(u) = 2$.  Therefore for any $v$,
$$\frac{1}{d(v)+1}\sum_{u\in \tilde N(v)}  \gamma_\ell'(u) = \tfrac 34k + 1.$$
For $k>4$, this is less than the fractional chromatic number, i.e.\ $k$.  However, we would like to know if the condition holds when it is no longer possible to lower the bound by adding vertices:

\begin{question}
Does every graph $G$ satisfy
$$\chi_f(G) \leq \max_{H \subseteq G}\ \max_{v\in V(H)}\frac{1}{d_H(v)+1}\sum_{u\in \tilde N_H(v)}  {\gamma'_\ell}_H(u)?$$
\end{question}

If true, this would be very interesting, since it would require a different colouring method than the one used in Section \ref{sec:frac}.  To see this, consider the tree on six vertices, four of which have degree 1 and two of which have degree 3.  In this case the fractional colouring process described in Section \ref{sec:frac} gives a fractional 3-colouring, whereas the bound in question is $5/2$.

On the subject of integer colouring, proving Conjecture \ref{con:3} for claw-free graphs does not seem easier than proving Conjecture \ref{con:2} for claw-free graphs.  However, proofs of the local version seem easy to extend to the superlocal version.  In particular, we believe it should be easy to prove Conjecture \ref{con:3} for claw-free graphs with $\alpha \leq 3$, following the proof in \cite{kingthesis}.

\section{Acknowledgements}

We thank the referee for a careful and helpful review, and the editors for their time and contribution to the journal.

\end{document}